% -*- TeX-master: t; TeX-PDF-mode: t -*-

% This is samplepaper.tex, a sample chapter demonstrating the
% LLNCS macro package for Springer Computer Science proceedings;
% Version 2.20 of 2017/10/04
%
\documentclass[runningheads]{llncs}
\usepackage{graphicx}
\usepackage{adjustbox}

\usepackage{xspace}
\usepackage{hyperref}
\usepackage{listings}
\usepackage{amsmath}
\usepackage{pifont}
\usepackage{amssymb}
\usepackage{color}
\usepackage{subfig}
\usepackage{wrapfig}
\usepackage{float}
\usepackage{float}
\usepackage{refs}
\usepackage{graphicx}
\usepackage{pgfplots}
\usepackage{pgfplotstable}
\usepackage{verbatim,bm}
\usepackage{stmaryrd}
\usepackage{multirow}
\usepackage{array}
\usepackage{tikz}
\usepackage{tabularx}
\usepackage{longtable}
\usepackage{arydshln}
\usetikzlibrary{arrows,positioning}
\usepackage[T1]{fontenc}

%\usepackage{cleveref}
%\usepackage{eqbox}

%space tricks
\makeatletter
\g@addto@macro\normalsize{%
  \setlength\abovedisplayskip{4pt}
  \setlength\belowdisplayskip{3pt}
  \setlength\abovedisplayshortskip{3pt}
  \setlength\belowdisplayshortskip{3pt}
}
\makeatother
%
%\setlength{\belowcaptionskip}{-10pt}

% end space trikcs

\usepackage[breakable, theorems, skins]{tcolorbox}
\newenvironment{mybox}[1][gray!20]{
    \begin{tcolorbox}[   %% Adjust the following parameters at will.
        breakable,
        left=0pt,
        right=0pt,
        top=0pt,
        bottom=-1pt,
        colback=#1,
        colframe=#1,
        width=\dimexpr\textwidth\relax,
        %enlarge left by=0mm,
        boxsep=4pt,
        arc=0pt,outer arc=0pt,    
        %after={\vspace{-0.8cm}},
    ]
}{
    \end{tcolorbox}
}

\newcounter{resq}%[section]
\newenvironment{resq}[1][]
{
	\vspace{-2mm}
    \refstepcounter{resq}\par\medskip    
    \begin{mybox}
    \noindent \textbf{EQ~\theresq. #1} \rmfamily
}{
    \end{mybox}
\vspace{-2.5mm}
    \medskip
}

\newcommand{\rone}{(\emph{i})~}
\newcommand{\rtwo}{(\emph{ii})~}
\newcommand{\rthree}{(\emph{iii})~}

\definecolor{mGreen}{rgb}{0,0.6,0}
\definecolor{mGray}{rgb}{0.5,0.5,0.5}
\definecolor{mPurple}{rgb}{0.58,0,0.82}

\lstdefinestyle{CStyle}{
	commentstyle=\color{mGreen},
	keywordstyle=\color{mGreen},
	numberstyle=\tiny\color{mGray},
	escapeinside={(*@}{@*)},
	stringstyle=\color{mPurple},
	basicstyle=\footnotesize\ttfamily,
	breakatwhitespace=false,         
	breaklines=true,                 
	captionpos=b,                    
	keepspaces=true,                 
	numbers=left,                    
	numbersep=5pt,                  
	showspaces=false,                
	showstringspaces=false,
	showtabs=false,                  
	tabsize=2,
	language=C
}

%comment
\newcommand {\loris}[1]{{\color{blue}\bf{L: #1}\normalfont}}

\newcommand{\mypar}[1]{\vspace{0.5mm}\noindent  \textit{#1}}

%Math
\newcommand{\sem}[1]{\llbracket #1 \rrbracket\xspace}

\newcommand{\en}{\textit{enc}\xspace}
\newcommand{\enprod}{\texttt{En}_\delta\xspace}

\newcommand\eqdef{\stackrel{{\normalfont\mbox{\tiny{def}}}}{=}}

\newcommand{\type}{\tau}
\newcommand{\eq}{eq}
\newcommand{\re}{re}
\newcommand{\xmark}{\ding{55}}%

%Names
\def\name{\textsc{nope}\xspace}
\def\sygus{\textsc{SyGuS}\xspace}
\def\qsygus{\textsc{QSyGuS}\xspace}
\def\seahorn{\textsc{SeaHorn}\xspace}
\def\z3{\textsc{Z3}\xspace}
\def\cex{{E}\xspace}
\def\cegis{\textsc{CEGIS}\xspace}
\newcommand\nd[1]{\texttt{nd(#1)}}

\newcommand\nextbit[1]{\texttt{nextbit(#1)\xspace}}

\def\numSolved{59\xspace}
\def\numBenchmarks{132\xspace}
\def\numIfBenchmarks{57\xspace}
\def\numPlusBenchmarks{30\xspace}
\def\numConstBenchmarks{45\xspace}

\def\avgtime{15.59\xspace}

\def\nat{\mathbb{N}}
\def\sy{{sy}}
\def\num{{\#}}
\def\numn{{\#_{nd}}}
\def\produ{pr}
\def\node{q}

\newcommand\pge[2]{P[{#1},{#2}]}

\newcommand\gv[2]{\texttt{g\_#1\_#2}}
\newcommand\gva[3]{\texttt{#1\_#2\_#3}}
\newcommand\child[2]{\texttt{child\_#1\_#2}}

\newcommand{\finding}[1]{\emph{#1}}

\newcommand{\Omit}[1]{}

% Used for displaying a sample figure. If possible, figure files should
% be included in EPS format.
%
% If you use the hyperref package, please uncomment the following line
% to display URLs in blue roman font according to Springer's eBook style:
% \renewcommand\UrlFont{\color{blue}\rmfamily}

\usepackage[leftbars,color]{changebar}

\usepackage{ulem}
%\nochangebars
\cbcolor{red}

\sloppy
\newif\iffull
\fulltrue
\begin{document}
\normalem
\title{Proving Unrealizability \\ for Syntax-Guided Synthesis}
%
%\titlerunning{Abbreviated paper title}
% If the paper title is too long for the running head, you can set
% an abbreviated paper title here
%
\author{Qinheping Hu$^1$,  Jason Breck$^1$, John Cyphert$^1$,\\ Loris D'Antoni$^1$, Thomas Reps$^{1,2}$}
\institute{$^1$ University of Wisconsin-Madison, Madison, USA\\
	$^2$ GrammaTech, Inc., USA}
%
% First names are abbreviated in the running head.
% If there are more than two authors, 'et al.' is used.
%
%
\maketitle              % typeset the header of the contribution
\begin{abstract}
% The abstract should briefly summarize the contents of the paper in 150--250 words.
% 1. State the problem
We consider the problem of automatically establishing that a given
syntax-guided-synthesis (\sygus) problem is \emph{unrealizable} (i.e., has no solution).
% 2. Say why it's an interesting problem
Existing techniques have quite limited ability to establish unrealizability for general \sygus instances in which the grammar
	describing the search
	space contains infinitely many programs. 
% 3. Say what your solution achieves
By encoding the synthesis problem's grammar $G$ as a nondeterministic program $P_G$,
we reduce the unrealizability problem to a reachability problem such that,
if a standard program-analysis tool can establish that a certain
assertion in $P_G$ always holds, then the synthesis problem is unrealizable.

\hspace*{1.5ex}
% 4. Say what follows from your solution
Our method can be used to augment existing \sygus tools
so that they can establish that a successfully synthesized program $q$
is \emph{optimal} with respect to some syntactic cost---e.g., $q$ has
the fewest possible if-then-else operators.
Using known techniques, grammar $G$ can be transformed
to generate 
the set of
 all programs with lower costs than $q$---e.g.,
fewer conditional expressions.
Our algorithm can then be applied to show that the resulting
synthesis problem is unrealizable.
We implemented the proposed technique in a tool called \name.
\name can prove unrealizability for  \numSolved/\numBenchmarks
variants of existing linear-integer-arithmetic \sygus benchmarks, whereas all existing \sygus solvers lack the ability to prove that
these benchmarks are unrealizable, and time out on them.

%% Using grammar operations proposed by Hu and D'Antoni~\cite{HuD18},
%% one can change the grammar $G$ to generate exactly all programs
%% with lower cost than $q$---e.g.,  fewer if-then-else statements---and 
%% then apply our algorithm to show that the synthesis problem is unrealizable.
%\keywords{First keyword  \and Second keyword \and Another keyword.}
\end{abstract}

% -*- TeX-master: t; TeX-PDF-mode: t -*-

\section{Introduction}
\label{Se:Introduction}

%% Context: Set the stage, motivate the general topic
The goal of program synthesis is to find a program in some search
space that meets a specification---e.g., satisfies a set of examples or a
logical formula.
Recently, a large family of synthesis problems has been unified into a framework called
\emph{syntax-guided synthesis} (\sygus).
A \sygus problem is specified by a regular-tree grammar that describes
the search space of programs,
and a logical formula that constitutes the behavioral specification.
Many synthesizers now support a specific format for \sygus problems~\cite{sygus},
and compete in annual synthesis competitions~\cite{alur2016sygus}.
Thanks to these competitions, these solvers are now quite mature and
are finding a wealth of applications~\cite{pldi17inversion}.

Consider the \sygus problem to synthesize a function $f$ that computes the maximum of two
variables $x$ and $y$, denoted by $(\psi_{\texttt{max2}}(f,x,y), G_1)$.
The goal is to create $e_f$---an expression-tree for $f$---where $e_f$
is in the language of the following regular-tree grammar $G_1$:
%\begin{footnotesize}
\[\hspace{-5pt}
	\begin{array}{r@{~}r@{~~}ll}
	%\multicolumn{3}{l}{\textrm{Main definitions:}} \\
	\textrm{Start}	&	::=	& \textrm{Plus}(\textrm{Start}, \textrm{Start}) \hspace{0.34mm}
                                             \mid \textrm{IfThenElse}(\textrm{BExpr}, \textrm{Start}, \textrm{Start})
                                             \mid x \mid y \mid 0 \mid 1\\			
	\textrm{BExpr}	&	::=	& \textrm{GreaterThan}(\textrm{Start}, \textrm{Start})
                                             \mid \textrm{Not}(\textrm{BExpr})
                                             \mid \textrm{And}(\textrm{BExpr}, \textrm{BExpr})
	\end{array}
\]
%\end{footnotesize}
and $\forall x,y. \psi_{\texttt{max2}}(\sem{e_f},x,y)$ is valid, where 
$\sem{e_f}$ denotes the meaning of $e_f$, and
\[
  \psi_{\texttt{max2}}(f,x,y) := f(x,y)\ge x \wedge f(x,y)\ge y\wedge (f(x,y)=x\vee f(x,y) =y).
\]
\sygus solvers can easily find a solution, such as 
\[
  e := \textrm{IfThenElse}(\textrm{GreaterThan}(x,y),x,y).
\]

%% Gap: Explain your specific problem and why existing work does not adequately solve it}

Although many solvers can now find solutions efficiently to many \sygus problems,
there has been effectively no work on the much harder task of proving that
a given \sygus problem is \emph{unrealizable}---i.e., it does not admit a solution.
For example, consider the \sygus problem $(\psi_{\texttt{max2}}(f,x,y), G_2)$,
where $G_2$ is the more restricted grammar with if-then-else operators and conditions
stripped out:
%\begin{footnotesize}
\[\hspace{-5pt}
	\begin{array}{r@{~}r@{~~}ll}
	%\multicolumn{3}{l}{\textrm{Main definitions:}} \\
	\textrm{Start}	&	::=	& \textrm{Plus}(\textrm{Start}, \textrm{Start}) \hspace{0.34mm}
                                             \mid x \mid y \mid 0 \mid 1
	\end{array}
\]
%\end{footnotesize}
This \sygus problem does \emph{not} have a solution, because no expression
generated by $G_2$ meets the specification.\footnote{Grammar $G_2$ only generates terms that are equivalent to some linear function of
  $x$ and $y$;
  however, the maximum function cannot be described by a linear function. 
}
However, to the best of our knowledge, current \sygus solvers cannot
prove that such a \sygus problem is unrealizable.\footnote{\Omit{The reader
  might wonder how and why a \sygus problem with such a
  restricted grammar would ever arise.
  }The synthesis problem presented above is one that is generated by a recent tool
  called \qsygus, which extends \sygus with quantitative syntactic objectives~\cite{HuD18}.
  The advantage of using quantitative objectives in synthesis is that they can be
  used to produce higher-quality solutions---e.g., smaller, more readable, more efficient, etc.
  The synthesis problem $(\psi_{\texttt{max2}}(f,x,y), G_2)$ arises from a
  \qsygus problem in which the goal is to produce an expression that
  (i) satisfies the specification $\psi_{\texttt{max2}}(f,x,y)$,
  and (ii) uses the smallest possible number of if-then-else operators. 
  Existing \sygus solvers can easily produce a solution that uses
  one if-then-else operator, but cannot prove that no better solution
  exists---i.e., $(\psi_{\texttt{max2}}(f,x,y), G_2)$ is unrealizable.
  \Omit{What is missing from \qsygus (and other \sygus solvers) is the ability
  to prove that no expression exists for $f$ that uses zero if-then-else
  operators and also satisfies the specification
  $\psi_{\texttt{max2}}(f,x,y)$.
  In this case, we are interested establishing that the synthesis problem
  $(\psi_{\texttt{max2}}(f,x,y), G_2)$ is unrealizable.
  More generally, we wish to develop techniques capable of establishing
  that a given synthesis problem is unrealizable.}
}

	A key property of the previous example is that the grammar is infinite. 
	When such a \sygus problem is realizable, any search technique that
	systematically explores the infinite search space of possible programs
	will eventually identify a solution to the synthesis problem. 
	In contrast, proving that a problem is unrealizable requires showing
	that \emph{every} program in the \emph{infinite} search space
	\emph{fails to
		satisfy} the specification.
	This problem is in general undecidable~\cite{CaulfieldRST15}. Although we cannot hope to have an algorithm for establishing unrealizability, the challenge is to find a technique that succeeds for the kinds of problems encountered in practice.
	\Omit{ However, enumeration search---the technique used in existing solvers---is ill-suited for this problem.  }
	Existing synthesizers can detect the absence of a solution in certain cases (e.g., because the grammar is finite, or is infinite but only generate a finite number of functionally distinct programs).  However, in practice, as our experiments show, this ability is limited---no existing solver was able to show unrealizability for any of the examples considered in this paper.

%% Innovation: State what you've done that is new, and explain how it helps fill the gap}
%% 
%% One possible organization for Innovation:
%% \begin{itemize}
%%   \item
%%     Describe the solution to the problem as a black box, so that it is
%%     clear what your solution offers over previous work
%%   \item
%%     Technical challenges to obtaining a solution (e.g., 1 paragraph for each)
%%   \item
%%     State how you solve the challenges (at most a few paragraphs)
%%   \item
%%     Experimental highlights
%% \end{itemize}

In this paper, we present a technique for proving that a possibly infinite \sygus problem is unrealizable.
Our technique builds on two ideas.
\begin{enumerate}
  \item
    \label{It:FiniteNumberOfExamples}
    We observe that unrealizability can often be proven using \emph{finitely
    many input examples}.
    In \sectref{IllustrativeExample}, we show how the example
    discussed above can be proven to be unrealizable using four input
    examples---$(0,0)$, $(0,1)$, $(1,0)$, and $(1,1)$.
  \item
    \label{It:EncodingOfGrammarAndExamples}
    We devise a way to encode a \sygus problem $(\psi(f,\bar{x}), G)$
    over a finite set of examples $E$ as a \emph{reachability problem in a
    recursive program} $\pge{G}{E}$.
    In particular, the program that we construct has an
    assertion that holds if and only the given \sygus problem is
    unrealizable.
    Consequently, \emph{unrealizability} can be proven by establishing
    that the assertion always holds.
    This property can often be established by a conventional
    program-analysis tool.
\end{enumerate}
The encoding mentioned in \itemref{EncodingOfGrammarAndExamples} is
non-trivial for three reasons.
The following list explains each issue, and sketches how
they are addressed

\vspace{1mm}
\mypar{1) Infinitely many terms.}
    We need to model the infinitely many terms generated by the grammar
    of a given synthesis problem $(\psi(f,\bar{x}), G)$.

   % \hspace*{1.5ex}
    To address this issue, we use non-determinism and recursion, and
    give an encoding $\pge{G}{E}$ such that (i) each non-deterministic path $p$ in
    the program $\pge{G}{E}$ corresponds to a possible expression $e_p$ that 
    $G$ can generate, and (ii) for each expression $e$ that $G$  
    can generate, there is a path $p_e$ in $\pge{G}{E}$.
    (There is an isomorphism between paths and the expression-trees of $G$)
    
%\vspace{1mm}    
\mypar{2) Nondeterminism.}
    We need the computation performed along each path $p$ in $\pge{G}{E}$ to mimic the
    execution of expression $e_p$.
    Because the program uses non-determinism, we need to make sure
    that, for a given path $p$ in the program $\pge{G}{E}$, computational steps
    are carried out that mimic the evaluation of $e_p$ for
    \emph{each} of the finitely many example inputs in $E$.
%    By this means, we must ensure that all of the inputs in $E$ evaluate the
%    \emph{same} expression $e_p$ in lock-step.

    %\hspace*{1.5ex}
    We address this issue by threading the expression-evaluation
    computations associated with each  example in $E$ through
    the \emph{same} non-deterministic choices.

%\vspace{1mm}
\mypar{3) Complex Specifications.}
    We need to handle specifications that allow for nested calls of
    the programs being synthesized. 

    %\hspace*{1.5ex}
    For instance, consider the specification $f(f(x)) = x$.
    To handle this specification, we introduce a new 
    variable $y$ and rewrite the specification as
    $f(x) = y \land f(y) = x$.
    Because $y$ is now also used as an input to $f$, we will thread both
    the computations of $x$ and $y$ through the non-deterministic recursive program.
%\end{itemize}

%\vspace{1mm}
%We have implemented our algorithm in a tool, \name, and evaluated
%it on \numBenchmarks unrealizable \sygus problems
%obtained from the CLIA tract of \sygus completion.
%\name can prove unrealizability for $\numSolved/\numBenchmarks$ benchmarks.

Our work makes the following contributions:
\begin{itemize}
  \item
    We  reduce the \sygus unrealizability problem to a
    reachability problem to which standard program-analysis tools can be
    applied (\sectref{IllustrativeExample} and \sectref{verification}).
  \item 
    We observe that, for many \sygus problems, unrealizability can be
    proven using \emph{finitely many input examples}, and use this idea to
    apply the Counter-Example-Guided Inductive Synthesis (CEGIS) algorithm
    to the problem of proving unrealizability (\sectref{cegis}).
  \item
    We give an encoding of a \sygus problem $(\psi(f,\bar{x}), G)$
    over a finite set of examples $E$ as a reachability problem in a
    nondeterministic recursive program $\pge{G}{E}$, which has the following
    property:
    if a certain assertion in $\pge{G}{E}$ always holds, then the
    synthesis problem is unrealizable (\sectref{verification}).
    \Omit{By threading the expression-evaluation
    computations associated with each of the example inputs in $E$ through
    the \emph{same} non-deterministic choices of $\pge{G}{E}$, our encoding guarantees
    that each path $p$ in the program corresponds to evaluating an
    expression $e_p$ in the grammar $G$ (and vice-versa).}
  \item
    We implement our technique in a tool \name using the ESolver
    synthesizer~\cite{alur2016sygus} as the \sygus solver and the SeaHorn
    tool~\cite{seahorn} for checking reachability.
    \name is able to establish unrealizability for \numSolved out of
    \numBenchmarks variants of benchmarks taken from the \sygus competition.
    	In particular, \name solves all benchmarks with no more than 15 productions in the grammar and requiring no more than 9 input examples for proving unrealizability. 
    Existing \sygus solvers lack the ability to prove that
    these benchmarks are unrealizable, and time out on them.
\end{itemize}
\sectref{RelatedWork} discusses related work.
%\sectref{Conclusion} concludes.
Some additional technical material, proofs, and full experimental results
are given in
\iffull
\apprefsp{AdditionalMaterial}{proofs}{supp-eval}, respectively.
\Omit{
\fi
\cite{long}.
\iffull
}
\fi
%The remainder of the paper is organized as follows:
%{\bf MISSING}

%%% Local Variables: 
%%% mode: latex
%%% TeX-master: "main.tex"
%%% End: 

% -*- TeX-master: t; TeX-PDF-mode: t -*-

\section{Illustrative Example}
\label{Se:IllustrativeExample}

In this section, we illustrate the main components of our framework
for establishing the unrealizability of a \sygus problem.

Consider the \sygus problem to synthesize a function $f$ that computes the maximum of two
variables $x$ and $y$, denoted by $(\psi_{\texttt{max2}}(f,x,y), G_1)$.
The goal is to create $e_f$---an expression-tree for $f$---where $e_f$
is in the language of the following regular-tree grammar $G_1$:
%\begin{footnotesize}
\[\hspace{-5pt}
\begin{array}{r@{~}r@{~~}ll}
%\multicolumn{3}{l}{\textrm{Main definitions:}} \\
\textrm{Start}	&	::=	& \textrm{Plus}(\textrm{Start}, \textrm{Start}) \hspace{0.34mm}
\mid \textrm{IfThenElse}(\textrm{BExpr}, \textrm{Start}, \textrm{Start})
\mid x \mid y \mid 0 \mid 1\\			
\textrm{BExpr}	&	::=	& \textrm{GreaterThan}(\textrm{Start}, \textrm{Start})
\mid \textrm{Not}(\textrm{BExpr})
\mid \textrm{And}(\textrm{BExpr}, \textrm{BExpr})
\end{array}
\]
%\end{footnotesize}
and $\forall x,y. \psi_{\texttt{max2}}(\sem{e_f},x,y)$ is valid, where 
$\sem{e_f}$ denotes the meaning of $e_f$, and
\[
\psi_{\texttt{max2}}(f,x,y) := f(x,y)\ge x \wedge f(x,y)\ge y\wedge (f(x,y)=x\vee f(x,y) =y).
\]
\sygus solvers can easily find a solution, such as 
\[
e := \textrm{IfThenElse}(\textrm{GreaterThan}(x,y),x,y).
\]

%% Gap: Explain your specific problem and why existing work does not adequately solve it}

Although many solvers can now find solutions efficiently to many \sygus problems,
there has been effectively no work on the much harder task of proving that
a given \sygus problem is \emph{unrealizable}---i.e., it does not admit a solution.
For example, consider the \sygus problem $(\psi_{\texttt{max2}}(f,x,y), G_2)$,
where $G_2$ is the more restricted grammar with if-then-else operators and conditions
stripped out:
%\begin{footnotesize}
\[\hspace{-5pt}
\begin{array}{r@{~}r@{~~}ll}
%\multicolumn{3}{l}{\textrm{Main definitions:}} \\
\textrm{Start}	&	::=	& \textrm{Plus}(\textrm{Start}, \textrm{Start}) \hspace{0.34mm}
\mid x \mid y \mid 0 \mid 1
\end{array}
\]
%\end{footnotesize}
This \sygus problem does \emph{not} have a solution, because no expression
generated by $G_2$ meets the specification.\footnote{Grammar $G_2$ generates all linear functions of
	$x$ and $y$, and hence generates an infinite number of functionally distinct programs;
	however, the maximum function cannot be described by a linear function. 
}
However, to the best of our knowledge, current \sygus solvers cannot
prove that such a \sygus problem is unrealizable.
As an example, we use the problem $(\psi_{\texttt{max2}}(f,x,y), G_2)$
discussed in \sectref{Introduction}, and show how unrealizability
can be proven using four input examples: $(0,0)$, $(0,1)$, $(1,0)$, and $(1,1)$.

\begin{figure}[!t]
	\begin{adjustbox}{width=.75\textwidth,center}
	\begin{lstlisting}[style=CStyle]
int I_0;
void Start(int x_0,int y_0){
	if((*@\nd{}@*)){  // Encodes ``Start ::= Plus(Start, Start)'' (*@\label{Li:PlusBegin0}@*)
		Start(x_0, y_0);   (*@\label{Li:StartRecursiveOne0}@*)
		int tempL_0 = I_0;
		Start(x_0, y_0);   (*@\label{Li:StartRecursiveTwo0}@*)
		int tempR_0 = I_0;
		I_0 = tempL_0 + tempR_0;  (*@\label{Li:PlusEnd0}@*)
	}
	else if((*@\nd{}@*))	I_0 = x_0;  // Encodes ``Start ::= x''  (*@\label{Li:StartX0}@*)
	else if((*@\nd{}@*))	I_0 = y_0;  // Encodes ``Start ::= y''  (*@\label{Li:StartY0}@*)
	else if((*@\nd{}@*))	I_0 = 1;    // Encodes ``Start ::= 1''  (*@\label{Li:StartOne0}@*)
	else         I_0 = 0; (*@\label{Li:StartZero0}@*)   // Encodes ``Start ::= 0''
}
	
bool spec(int x, int y, int f){
	return (f>=x && f>=y && (f==x || f==y))
}
	
void main(){
	int x_0 = 0; int y_0 = 1;   // Input example (0,1)  (*@\label{Li:Initialization0}@*)
	Start(x_0,y_0);               (*@\label{Li:MainCallsStart0}@*)
	assert(!spec(x_0,y_0,I_0));   (*@\label{Li:Assertion0}@*)
}	\end{lstlisting}
	\end{adjustbox}
\caption{Program $\pge{G_2}{E_1}$ created during the course of proving the unrealizability
  of $(\psi_{\texttt{max2}}(f,x,y), G_2)$ using the set of input examples $E_1=\{(0,1)\}$.
}
\label{Fi:cexample0}
\end{figure}

Our method can be seen as a variant of Counter-Example-Guided
Inductive Synthesis (CEGIS), in which the goal is to create a program $P$
in which a certain assertion always holds.
Until such a program is created, each round of the algorithm returns
a counter-example, from which we extract an additional input example for the
original \sygus problem.
On the $i^{\textrm{th}}$ round, the current set of input examples $E_i$ is used,
together with the grammar---in this case $G_2$---and the specification of the
desired behavior---$\psi_{\texttt{max2}}(f,x,y)$, to create a candidate program $\pge{G_2}{E_i}$.
The program $\pge{G_2}{E_i}$ contains an assertion, and a standard program analyzer is used
to check whether the assertion always holds.

Suppose that for the \sygus problem $(\psi_{\texttt{max2}}(f,x,y), G_2)$
we start with just the one example input $(0,1)$---i.e., $E_1=\{(0,1)\}$.
\figref{cexample0} shows the initial program $\pge{G_2}{E_1}$ that our method creates.
The function \texttt{spec} implements the predicate $\psi_{\texttt{max2}}(f,x,y)$.
(All of the programs $\{ \pge{G_2}{E_i} \}$ use the same function \texttt{spec}.)
The initialization statements ``\texttt{int x\_0 = 0; int y\_0 = 1;}''
at \lineref{Initialization0} in procedure \texttt{main} correspond to
the input example $(0,1)$.
The recursive procedure \texttt{Start} encodes the productions of grammar $G_2$.
\texttt{Start} is non-deterministic;
it contains four calls to an external function \nd{}, which returns a
non-deterministically chosen Boolean value.
The calls to \nd{} can be understood as controlling whether or not a
production is selected from $G_2$ during a top-down, left-to-right
generation of an expression-tree:
\lineseqref{PlusBegin0}{PlusEnd0} correspond to
``Start ::= Plus(Start, Start),'' and
\linerefspp{StartX0}{StartY0}{StartOne0}{StartZero0} correspond to
``Start ::= x,'' ``Start ::= y,'' ``Start ::= 1,'' and ``Start ::= 0,'' respectively.
The code in the five cases in the body of \texttt{Start} encodes
the semantics of the respective production of $G_2$;
in particular, the statements that are executed along any execution
path of $\pge{G_2}{E_1}$ implement the \emph{bottom-up evaluation of
some expression-tree that can be generated by $G_2$}.
For instance, consider the path that visits statements in the
following order (for brevity, some statement numbers have been
elided):
\begin{equation}
  \label{Eq:PathPlusXOne0}
  \ref{Li:Initialization0}~~\,
  \ref{Li:MainCallsStart0}~~\,(_{\texttt{Start}}~~\,
    \ref{Li:PlusBegin0}~~\,
    \ref{Li:StartRecursiveOne0}~~\,(_{\texttt{Start}}~~\,
      \ref{Li:StartX0}~~\,
    )_{\texttt{Start}}~~\,
    \ref{Li:StartRecursiveTwo0}~~\,(_{\texttt{Start}}~~\,
      \ref{Li:StartOne0}~~\,
    )_{\texttt{Start}}~~\,
    \ref{Li:PlusEnd0}~~\,
  )_{\texttt{Start}}~~\,
  \ref{Li:Assertion0},
\end{equation}
where $(_{\texttt{Start}}$ and $)_{\texttt{Start}}$ indicate entry to,
and return from, procedure \texttt{Start}, respectively.
Path (\ref{Eq:PathPlusXOne0}) corresponds to the top-down, left-to-right generation
of the expression-tree \texttt{Plus(x,1)}, interleaved with the tree's bottom-up evaluation.

\begin{figure}[!t]
	\begin{adjustbox}{width=.85\textwidth,center}
	\begin{lstlisting}[style=CStyle]
int I_0, I_1, I_2, I_3;
void Start(int x_0,int y_0,...,int x_3,int y_3){
	if((*@\nd{}@*)){  // Encodes ``Start ::= Plus(Start, Start)'' (*@\label{Li:PlusBegin}@*)
		Start(x_0, y_0, x_1, y_1, x_2, y_2, x_3, y_3);    (*@\label{Li:StartRecursiveOne}@*)
		int tempL_0 = I_0; int tempL_1 = I_1;
		int tempL_2 = I_2; int tempL_3 = I_3;
		Start(x_0, y_0, x_1, y_1, x_2, y_2, x_3, y_3);     (*@\label{Li:StartRecursiveTwo}@*)
		int tempR_0 = I_0; int tempR_1 = I_1;
		int tempR_2 = I_2; int tempR_3 = I_3;
		I_0 = tempL_0 + tempR_0; 
		I_1 = tempL_1 + tempR_1;
		I_2 = tempL_2 + tempR_2; 
		I_3 = tempL_3 + tempR_3;}    (*@\label{Li:PlusEnd}@*)
	else if((*@\nd{}@*)) {  // Encodes ``Start ::= x''
		I_0 = x_0; I_1 = x_1; I_2 = x_2; I_3 = x_3;}   (*@\label{Li:StartX}@*)
	else if((*@\nd{}@*)) {  // Encodes ``Start ::= y''
		I_0 = y_0; I_1 = y_1; I_2 = y_2; I_3 = y_3;}   (*@\label{Li:StartY}@*)
	else if((*@\nd{}@*)) {  // Encodes ``Start ::= 1''
		I_0 = 1;   I_1 = 1;   I_2 = 1;   I_3 = 1;}           (*@\label{Li:StartOne}@*)
	else {          // Encodes ``Start ::= 0''  (*@\label{Li:StartZero}@*)  
		I_0 = 0;   I_1 = 0;   I_2 = 0;   I_3 = 0;}
}
	
bool spec(int x, int y, int f){
	return (f>=x && f>=y && (f==x || f==y))
}
	
void main(){
	int x_0 = 0; int y_0 = 1;  // Input example (0,1)     (*@\label{Li:InitializationBegin}@*)
	int x_1 = 0; int y_1 = 0;  // Input example (0,0)
	int x_2 = 1; int y_2 = 1;  // Input example (1,1)
	int x_3 = 1; int y_3 = 0;  // Input example (1,0)     (*@\label{Li:InitializationEnd}@*)
	Start(x_0,y_0,x_1,y_1,x_2,y_2,x_3,y_3);               (*@\label{Li:MainCallsStart}@*)
	assert(   !spec(x_0,y_0,I_0) || !spec(x_1,y_1,I_1)    (*@\label{Li:AssertionBegin}@*)
         || !spec(x_2,y_2,I_2) || !spec(x_3,y_3,I_3));       (*@\label{Li:AssertionEnd}@*)
}	\end{lstlisting}
	\end{adjustbox}
\caption{Program $\pge{G_2}{E_4}$ created during the course of proving the unrealizability
  of $(\psi_{\texttt{max2}}(f,x,y), G_2)$ using the set of input examples $E_4=\{(0,0), (0,1), (1,0), (1,1)\}$.
}
\label{Fi:cexample}
\end{figure}

Note that with path (\ref{Eq:PathPlusXOne0}), when control returns to \texttt{main},
variable \texttt{I\_0} has the value 1, and thus the assertion at
\lineref{Assertion0} fails.

A sound program analyzer will discover that some such path exists in the program,
and will return the sequence of non-deterministic choices required to follow
one such path.
Suppose that the analyzer chooses to report path
(\ref{Eq:PathPlusXOne0}); the sequence of choices would be $t, f, t,
f, f, f, t$, which can be decoded to create the expression-tree
\texttt{Plus(x,1)}.
At this point, we have a candidate definition for $f$: $f = x + 1$.
This formula can be checked using an SMT solver to see whether it
satisfies the behavioral specification $\psi_{\texttt{max2}}(f,x,y)$.
In this case, the SMT solver returns ``false.''
One counter-example that it could return is $(0,0)$.

At this point, program $\pge{G_2}{E_2}$ would be constructed using both of the example inputs
$(0,1)$ and $(0,0)$.
Rather than describe $\pge{G_2}{E_2}$, we will describe the final program constructed, $\pge{G_2}{E_4}$
(see \figref{cexample}).
%The structure of $\pge{G_2}{E_2}$ will be obvious from the structure of $\pge{G_2}{E_4}$.

As can be seen from the comments in the two programs, program $\pge{G_2}{E_4}$ has the same
basic structure as $\pge{G_2}{E_1}$.
\begin{itemize}
  \item
    \texttt{main} begins with initialization statements for
    the four example inputs.
  \item
    \texttt{Start} has five cases that correspond to the
    five productions of $G_2$.
\end{itemize}
The main difference is that because the encoding of $G_2$ in \texttt{Start}
uses non-determinism, we need to make sure that along \emph{each}
path $p$ in $\pge{G_2}{E_4}$, each of the example inputs is used to evaluate the
\emph{same} expression-tree.
We address this issue by threading the expression-evaluation
computations associated with each of the example inputs through
the \emph{same} non-deterministic choices.
That is, each of the five ``production cases'' in \texttt{Start} has four
encodings of the production's semantics---one for each of the four
expression evaluations.
By this means, the statements that are executed along path $p$ perform
\emph{four simultaneous bottom-up evaluations} of the expression-tree
from $G_2$ that corresponds to $p$.

Programs $\pge{G_2}{E_2}$ and $\pge{G_2}{E_3}$ are similar to $\pge{G_2}{E_4}$, but their paths carry
out two and three simultaneous bottom-up evaluations, respectively.
The actions taken during rounds 2 and 3 to generate a new
counter-example---and hence a new example input---are similar to what
was described for round 1.
On round 4, however, the program analyzer will determine that the
assertion on \lineseqref{AssertionBegin}{AssertionEnd} always holds,
which means that there is no path through $\pge{G_2}{E_4}$ for which the
behavioral specification holds for all of the input examples.
%Because there is an isomorphism between the set of paths in $\pge{G_2}{E_4}$ and
%the expression-trees generated by $G_2$, 
This property means that
there is no expression-tree that satisfies the specification---i.e.,
the \sygus problem $(\psi_{\texttt{max2}}(f,x,y), G_2)$ is unrealizable.

Our implementation uses the program-analysis tool \seahorn\ \cite{seahorn}
as the assertion checker.
In the case of $\pge{G_2}{E_4}$, \seahorn takes only 0.5 seconds to establish
that the assertion in $\pge{G_2}{E_4}$ always holds.

%%% Local Variables: 
%%% mode: latex
%%% TeX-master: "main.tex"
%%% End: 

%\input{background}
% -*- TeX-master: t; TeX-PDF-mode: t -*-

\section{\sygus, Realizability, and \cegis}
\label{Se:cegis}
\Omit{
In this section, we recall the definition of a Syntax-Guided Synthesis (\sygus) problem, 
the Counterexample-Guided Inductive Synthesis (\cegis) algorithm, and show how
\cegis can be used to prove unrealizability for  \sygus problems.
\Omit{\begin{changebar}
	\sout{(Proofs can be found in \appref{proofs}.)}
\end{changebar}}
}

\subsection{Background}
\paragraph{Trees and Tree Grammars.}
A \emph{ranked alphabet} is a tuple $(\Sigma,rk_\Sigma)$ where
$\Sigma$ is a finite set of symbols and $rk_\Sigma:\Sigma\to\nat$
associates a rank to each symbol.
For every $m\ge 0$, the set of all symbols in $\Sigma$ with rank $m$
is denoted by $\Sigma^{(m)}$.
In our examples, a ranked alphabet is specified by showing the set
$\Sigma$ and attaching the respective rank to every symbol as a
superscript---e.g., $\Sigma=\{+^{(2)},c^{(0)}\}$.
(For brevity, the superscript is sometimes omitted.)
We use $T_\Sigma$ to denote the set of all (ranked) trees over
$\Sigma$---i.e., $T_\Sigma$ is the smallest set such that
\rone $\Sigma^{(0)} \subseteq T_\Sigma$,
\rtwo if $\sigma^{(k)} \in \Sigma^{(k)}$ and $t_1,\ldots,t_k\in T_\Sigma$, then
$\sigma^{(k)}(t_1,\cdots,t_k)\in T_\Sigma$.
In what follows, we assume a fixed ranked alphabet $(\Sigma,rk_\Sigma)$.

In this paper, we focus on \emph{typed} regular tree grammars, in which
each nonterminal and each symbol is associated with a type.
There is a finite set of types $\{ \type_1, \ldots, \type_k \}$.
Associated with each symbol $\sigma^{(i)} \in \Sigma^{(i)}$, there is a type assignment
$a_{\sigma^{(i)}} = (\tau_0, \tau_1, \ldots, \tau_i)$, where $\tau_0$ is called the \emph{left-hand-side type}
and $\tau_1, \ldots, \tau_i$ are called the \emph{right-hand-side types}.
Tree grammars are similar to word grammars, but generate trees over a ranked alphabet
instead of words. 

\begin{definition}[Regular Tree Grammar]
A \textbf{\emph{typed regular tree grammar}} (RTG) is a tuple $G=(N,\Sigma,S,a,\delta)$,
where $N$ is a finite set of non-terminal symbols of arity 0;
$\Sigma$ is a ranked alphabet;
$S\in N$ is an initial non-terminal;
$a$ is a type assignment that gives types for members of $\Sigma\cup N$;
and $\delta$ is a finite set of productions of the form $A_0 \to \sigma^{(i)}(A_1, \ldots, A_i)$,
where for $1 \leq j \leq i$, each $A_j \in N$ is a non-terminal such that if 
$a(\sigma^{(i)}) = (\tau_0, \tau_1, \ldots, \tau_i)$ then 
$a(A_j) = \tau_j$.
%is a tree in $T_{(\Sigma\cup N)}$
%\tom{of height 1? I don't see the advantage of allowing ``productions'' to be compound objects.
%Why not identify non-terminals with types, and say that a production has the form 
%$t \to \sigma^{(i)}(t_1, \ldots, t_i)$, so that the production matches the type assignment
%$a_{\sigma^{(i)}}$ of $\sigma^{(i)}$}.
\end{definition}

In a \sygus problem, each variable, such as $x$ and $y$ in the example
RTGs in \sectref{Introduction}, is treated as an arity-$0$
symbol---i.e., $x^{(0)}$ and $y^{(0)}$.

Given a tree $t\in T_{\Sigma\cup N}$,
applying a production $r = A\to\beta$ to $t$ produces the tree $t'$ resulting from replacing
the left-most occurrence of $A$ in $t$ with the right-hand side $\beta$.
A tree $t\in T_\Sigma$ is generated by the grammar $G$---denoted by $t\in
L(G)$---iff it can be obtained by applying a sequence of
productions $r_1\cdots r_n$ to the tree whose root is the initial
non-terminal $S$.

%
%\tom{To me, the next paragraph seems confused and much too complicated.
%I would define a derivation as a composition of productions that is consistent
%with the type assignments of the alphabet symbols in the productions.
%A tree is merely a derivation in which all leaves are productions
%of the form $A \to \sigma^{(0)}$.}
%
%We use $MS(\beta)=\langle X_1, \ldots, X_k \rangle$ to denote the multi-set of all
%\tom{$k$?} nonterminal instances that appear in $\beta$, and 
%$\beta[t_1/X_1,\ldots, t_k/X_k]$ to denote the result of simultaneously substituting
%each $X_i$ with $t_i$ in $\beta$. 
%Given a derivation $p=A\to \beta$ such that
%$MS(\beta)=\langle X_1, \ldots, X_k \rangle$, we assume that $p$
%is a symbol of arity $k$.
%A derivation $d$ starting at non-terminal $X$ is a tree 
%of productions $d\in T(P)$
%representing one possible way to derive a tree starting from $X$.
%The derivation has to be such that:
%\rone  the root of $d$ is a production of the form $X\to\beta$,
%\rtwo for every node $p=A\to\beta$ in $d$, if $MS(\beta)=\langle X_1, \ldots, X_k \rangle$,
%then, for every $1\leq i\leq k$, the $i^{\textrm{th}}$ child of $p$
%is a production $X_i\to\beta_i$.
%Given a derivation $d$ with root $p=X\to\beta$, 
%such that $MS(\beta)=\langle X_1, \ldots, X_k \rangle$
%and $p$ has child subtrees $d_1,\ldots, d_k$,
%the tree generated by $d$ is defined recursively as
%$tree(d)=\beta[tree(d_1)/X_1,\ldots, tree(d_k)/X_k]$.
%A tree $t\in T_\Sigma$ is accepted by the grammar $G$---denoted $t\in L(G)$---iff
%there exists a derivation $d$ such that $tree(d)=t$.

\paragraph{Syntax-Guided Synthesis.}
A \sygus problem is specified with respect to a background theory $T$---e.g., linear arithmetic---and
the goal  is to synthesize a function $f$ that satisfies two constraints provided by the user. 
The first constraint, $\psi(f,\bar{x})$, describes a \emph{semantic property} that $f$ should satisfy. 
The second constraint limits the \emph{search space} $S$ of $f$, and is
given as a set of expressions specified by an RTG $G$
that defines a subset of all terms in $T$. 
%A solution to the \sygus problem is an expression $e$ 
%in $L(G)$ such that the formula $\forall \bar{x}.\psi(\sem{e},\bar{x})$ is valid.

\begin{definition}[\sygus]
A \sygus problem over a background theory $T$ is a pair 
$\sy =(\psi(f,\bar{x}), G)$ where
$G$ is a regular tree grammar that only contains terms in $T$---i.e., $L(G)\subseteq T$---and
$\psi(f,\bar{x})$ is a Boolean formula constraining the semantic behavior of the synthesized program $f$.

A \sygus problem is \textbf{\emph{realizable}} if there exists a expression $e\in L(G)$ such that
$\forall \bar{x}. \psi(\sem{e},\bar{x})$ is true. Otherwise we say that the problem is \textbf{\emph{unrealizable}}.
\end{definition}

\begin{theorem}[Undecidability~\cite{CaulfieldRST15}]
Given a \sygus problem $\sy$, it is undecidable to check whether $\sy$ is realizable.
\end{theorem}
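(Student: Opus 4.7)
The plan is to reduce from a classical undecidable problem; the two most natural choices are Hilbert's 10th problem (solvability of Diophantine equations over $\mathbb{Z}$) and the Post Correspondence Problem. The general observation driving both reductions is that the SyGuS framework, in its full generality, lets one combine an essentially unrestricted grammar $G$ (enumerating candidate certificates) with an essentially unrestricted semantic constraint $\psi$ (checking them), so any search problem whose certificates are undecidable to find transfers directly to SyGuS realizability.

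Concretely, I would first carry out the Hilbert-10 reduction. Fix a background theory $T$ that includes integer arithmetic with multiplication. Given a polynomial $p(x_1,\dots,x_n)$ with integer coefficients, I would construct a SyGuS instance $\mathit{sy}_p = (\psi_p, G_p)$ in which: (i) the grammar $G_p$ generates $n$-tuples of arbitrary integer constants, e.g.\ via nonterminals $C_1,\dots,C_n$ each with productions $C_i \to 0 \mid 1 \mid -C_i \mid C_i + C_i$, packaged by the start production into an arity-zero function $f$ whose value is the tuple $(c_1,\dots,c_n)$; and (ii) the specification is $\psi_p(f) \,:=\, p(f_1,\dots,f_n) = 0$, where $f_i$ projects the $i$-th component. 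Since there are no free input variables $\bar{x}$, the universal quantifier is vacuous, so $\mathit{sy}_p$ is realizable iff some integer tuple makes $p$ vanish, i.e.\ iff the Diophantine instance has a solution. The map $p \mapsto \mathit{sy}_p$ is clearly computable, so the undecidability of Hilbert's 10th problem transfers to SyGuS realizability.

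The main obstacle, and the place where care is needed, is ensuring that the constructed instance sits strictly inside the formal SyGuS definition given earlier: $L(G_p) \subseteq T$, the type assignment is consistent, and $\psi_p$ is a well-formed Boolean formula of $T$. A minor subtlety is the required signature of $f$: if the formalism insists that $f$ take at least one argument, I would give $f$ a dummy argument $x$ and make $\psi_p$ independent of $x$. If one instead wishes to stay within a theory without multiplication (e.g.\ linear integer arithmetic), the Hilbert-10 route fails, and I would fall back to a PCP reduction: the grammar enumerates finite index sequences $i_1 i_2 \cdots i_k$ via a recursive nonterminal, and the specification uses either string concatenation (in a string theory) or base-$b$ arithmetic encoding to assert that $u_{i_1}\cdots u_{i_k} = v_{i_1}\cdots v_{i_k}$ and $k \ge 1$. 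In both reductions the two directions (realizable $\Rightarrow$ certificate; certificate $\Rightarrow$ realizable) are immediate once the encoding is in place, so the argument is essentially entirely the design of the instance rather than a delicate correctness proof.
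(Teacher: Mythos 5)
First, note that the paper does not prove this theorem at all: it is imported verbatim from Caulfield et al.~\cite{CaulfieldRST15}, and no proof appears in the body or the appendix. So there is no in-paper argument to compare against; your proposal has to be judged on its own terms and against the role the theorem plays in this paper.

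Judged that way, there is a genuine gap. Your primary reduction (Hilbert's tenth problem) requires the specification $\psi$ to be a formula of integer arithmetic \emph{with multiplication}, i.e., a background theory whose first-order theory is undecidable. The paper explicitly restricts attention to first-order \emph{decidable} background theories (this is stated in \sectref{cegis} and is what makes the verifier $V$ in the \cegis loop well-defined), and the entire point of the cited result is that realizability is undecidable \emph{even then} --- e.g., already for linear integer arithmetic with a regular tree grammar, which is exactly the setting of every construction and benchmark in this paper. A reduction that escapes into nonlinear arithmetic therefore proves a statement that is both weaker and off-target, even though, as you note, the per-candidate verification in your instance happens to be a ground sentence and hence decidable. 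Your own fallback acknowledges this, but the fallback is precisely where the work lies and it is left as a sketch: the ``string theory'' variant again leaves the paper's setting, and the base-$b$ arithmetic variant needs an explicit construction showing how a \emph{single} synthesized function, generated by a regular tree grammar over LIA terms and constrained by one fixed LIA formula $\psi(f,\bar{x})$, can be forced to encode \emph{two} synchronized concatenations (the top and bottom strings of a PCP match) --- for instance by evaluating $f$ at designated inputs or by an arithmetic pairing threaded through the productions. Until that encoding is written down and both directions checked (including that $L(G)\subseteq T$ and that the index sequence is nonempty), the proof is a plan rather than a proof. The fix is to drop the Hilbert-10 route entirely and carry out the LIA/PCP (or counter-machine) encoding in full, which is the substance of the result in~\cite{CaulfieldRST15}.
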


%
%\begin{equation}
%\exists e\in L(G)\forall\overline{x}.\varphi(\overline{x})[f \leftarrow e].
%\label{eq:sygus}
%\end{equation}
%Similarly, a \sygus problem is \emph{unrealizable}--there exists no valid solution--iff
%\begin{equation}
%\forall e\in L(G) \exists \overline{x}.\neg\varphi(\overline{x})[f\leftarrow e].
%\label{eq:unrealizable}
%\end{equation}

\paragraph{Counterexample-Guided Inductive Synthesis}

The Counterexample-Guided Inductive Synthesis (\cegis) algorithm is a popular approach to solving synthesis problems. 
Instead of directly looking for an expression that satisfies the
specification $\varphi$ on \emph{all} possible inputs, the \cegis algorithm
uses a synthesizer $S$ that can find expressions that are correct on a
\emph{finite} set of examples $E$.
If $S$ finds a solution that is correct on all elements of $E$, \cegis
uses a verifier $V$ to check whether the discovered solution is also
correct for all possible inputs to the problem.
If not, a counterexample obtained from $V$ is added to the set of examples, and the
process repeats.
More formally, \cegis starts with an empty set of examples $\cex$ and
repeats the following steps:
\begin{enumerate}
  \item
    \label{It:CallSynthesizer}
    Call the synthesizer $S$ to find an expression $e$ such that 
    $\psi^\cex(\sem{e},\bar{x})\eqdef \forall\bar{x}\in\cex.\psi(\sem{e},\bar{x})$
    holds and go to step \ref{It:CallVerifier};
    return \emph{unrealizable} if no expression exists.
  \item
    \label{It:CallVerifier}
    Call the verifier $V$ to find a model $c$ for the formula $\neg \psi(\sem{e},\bar{x})$,
    and add $c$ to the counterexample set $\cex$;
    return $e$ as a valid solution if no model is found.
\end{enumerate}
%Figure~\ref{fig:cegis} shows the workflow of the basic \cegis framework. 
%\loris{Fix figure with new notation}

%To implement the \cegis algorithm, one needs to provide a synthesizer $S$ and a verifier $V$.
Because \sygus problems are only defined over first-order decidable theories, any SMT solver
can be used as the verifier $V$ to check whether the formula $\neg \psi(\sem{e},\bar{x})$ is satisfiable. 
On the other hand, providing a synthesizer $S$ to find solutions such that $\forall \bar{x}\in\cex.\psi(\sem{e},\bar{x})$
holds is a much harder problem because $e$ is a second-order term drawn from an infinite
search space.
In fact, checking whether such an $e$ exists is an undecidable problem~\cite{CaulfieldRST15}.
\Omit{
\begin{changebar}\sout{For this reason, existing \sygus solvers based on \cegis cannot prove unrealizability }
\end{changebar}.}
%only provide
%a synthesizer that often find solutions, but cannot prove whether a
%solution fails to exist.

%\tom{Aren't we really reducing to an \emph{un}reachability problem?}
%\loris{the first sentence mentions ``finding whether there exists an expresion''}
%\tom{Yes, but that formulation could lead to confusion, because we apply a
%program analyzer to the constructed program, and the program
%analyzer returns a one-sided answer.
%The one-sidedness works in our favor for \emph{unrealizability}, in
%the sense that when the analyzer says ``unreachable'' the point in
%question is not reachable during any execution run, and hence the
%\sygus problem is unrealizability.
%See also my comment in \sectref{verification}.
%}
The main contribution of our paper is a reduction of the unrealizability problem---i.e.,
the problem of proving that there is no expression $e\in L(G)$ such that
$\forall \bar{x}\in\cex.\psi(\sem{e},\bar{x})$ holds---to
an unreachability problem (\sectref{verification}).
This reduction allows us to use existing (un)reachability verifiers to
check whether a \sygus instance is unrealizable.

\subsection{\cegis and Unrealizability}
\label{Se:CEGISAndUnrealizability}

The \cegis algorithm is sound but incomplete for proving unrealizability.
Given a \sygus problem $\sy=(\psi(f,\bar{x}),G)$ and a finite set of inputs $\cex$, we denote 
%with $\psi^\cex(f,\bar{x})\eqdef \bigwedge_{\bar{c}\in E}  \psi(f,\bar{c})$
%the restricted formula that only requires the function $f$ to be correct on the examples in $E$ and
with $\sy^\cex:=(\psi^\cex(f,\bar{x}),G)$ the  corresponding \sygus problem that only
requires the function $f$ to be correct on the examples in $E$.
%Lemma~\ref{lem:soundness-cegis} shows that proving unrealizability for $\sy^\cex$ means
%proving that $\sy$ is also unrealizable.
\begin{lemma}[Soundness]
\label{lem:soundness-cegis}
	If $\sy^\cex$ is unrealizable then $\sy$ is unrealizable.
\end{lemma}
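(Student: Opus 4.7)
The plan is to prove the lemma by contrapositive: I will show that if $\sy$ is realizable, then $\sy^\cex$ is realizable too. This direction is the natural one because realizability is a positive existential statement (exhibit a witness), whereas unrealizability is a universal non-existence statement.

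First I would unfold the definitions. Assume $\sy$ is realizable, so there exists $e \in L(G)$ with $\forall \bar{x}.\, \psi(\sem{e}, \bar{x})$ valid. Then I would simply instantiate this universal statement on the finite set of inputs $\cex$: since $\cex$ is a (finite) subset of the domain of $\bar{x}$, it follows that $\forall \bar{x} \in \cex.\, \psi(\sem{e}, \bar{x})$, which is exactly $\psi^\cex(\sem{e}, \bar{x})$. Hence the same expression $e$ witnesses realizability of $\sy^\cex$, contradicting the assumption that $\sy^\cex$ is unrealizable.

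There is essentially no obstacle here: the lemma is a one-line observation that weakening the specification from ``all inputs'' to ``all inputs in a finite subset $\cex$'' can only enlarge (never shrink) the set of satisfying programs in $L(G)$. The only thing to be mildly careful about is to keep straight which problem is the stronger one---$\sy^\cex$ relaxes $\sy$, so solutions of $\sy$ are automatically solutions of $\sy^\cex$, giving the direction stated. I would present the argument in two or three sentences and note that the converse (completeness) fails in general, which is exactly why the surrounding subsection is titled ``\cegis and Unrealizability'' and motivates the reduction to reachability developed in the next section.
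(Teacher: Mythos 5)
Your proof is correct and is essentially the paper's own argument: both rest on the observation that $\psi(\sem{e},\bar{c})\Rightarrow\psi^\cex(\sem{e},\bar{c})$ for every $e$ and every $\bar{c}\in\cex$, so any witness for $\sy$ is a witness for $\sy^\cex$, and the lemma follows by contraposition. No gaps.
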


%Clearly, the other direction does not hold---i.e., an unrealizable problem might admit a solution
%that is correct on finitely many inputs, but not on all inputs.

Even when given a perfect synthesizer $S$---i.e., one that
can solve a problem $sy^\cex$ for every possible set $\cex$---there are 
\sygus problems for which the
\cegis algorithm is not powerful enough to prove unrealizability.
%---i.e.,
%will it always terminate if $sy$ is unrealizable?
%The following lemma shows that \cegis algorithm is not powerful enough.

\begin{lemma}[Incompleteness]
\label{lem:incompleteness-cegis}
	There exists an unrealizable \sygus problem $\sy$ such that for every finite set of 
	examples $\cex$ the problem $\sy^\cex$ is realizable.
\end{lemma}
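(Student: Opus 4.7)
The plan is to exhibit a single concrete \sygus problem that witnesses the stated incompleteness; no structural induction or compactness theorem is needed. Working over linear integer arithmetic, I would take $G$ to be the regular tree grammar with productions
\[
  \text{Start} \to 0 \mid 1 \mid \text{Start} + \text{Start},
\]
so that $L(G)$ consists of closed arithmetic terms and $\{\sem{e} : e \in L(G)\} = \mathbb{N}$ (each term denotes a constant function). Let the single-variable specification be $\psi(f,x) \eqdef f(x) > x$, and set $\sy = (\psi(f,x), G)$.

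First, I would show $\sy$ is unrealizable. Every $e \in L(G)$ is a closed term, so $\sem{e}$ is a constant function whose value is some $c \in \mathbb{N}$. Instantiating the universally quantified $x$ to $c$ gives the false inequality $c > c$, so no $e \in L(G)$ can satisfy $\forall x.\,\psi(\sem{e},x)$.

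Second, I would show that for every finite set of examples $\cex \subseteq \mathbb{Z}$ the problem $\sy^\cex$ is realizable. Let $N$ be any integer with $N > \max(\cex)$ and $N \geq 1$, and choose $e_N \in L(G)$ with $\sem{e_N} = N$ (for instance, the right-associated sum of $N$ copies of $1$, obtained by repeatedly applying $\text{Start} \to \text{Start} + \text{Start}$ and then the leaf production $\text{Start} \to 1$). For every $x \in \cex$ we have $\sem{e_N}(x) = N > x$, so $\psi^\cex(\sem{e_N},x)$ holds, and hence $e_N$ witnesses realizability of $\sy^\cex$.

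The only real obstacle is picking an appropriate witness: a grammar ``rich enough'' to meet any finite pointwise demand yet ``too weak'' to uniformly dominate an unbounded input. The constants-against-identity construction achieves exactly this separation, and once the witness is in hand both directions reduce to a one-line finitary argument.
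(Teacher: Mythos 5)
Your proof is correct, but it uses a genuinely different witness from the paper's. The paper takes the specification $f(x)=x$ together with a grammar that omits the variable $x$ from the \textrm{Start} productions but retains \textrm{IfThenElse} and \textrm{Equals}$(x,\textrm{Start})$; each term therefore denotes a piecewise-constant function, which can \emph{interpolate} the identity on any finite example set via a cascade of conditionals, yet can never equal the identity everywhere. You instead take the strict-domination specification $f(x)>x$ with a bare constant grammar $\textrm{Start} ::= 0 \mid 1 \mid \textrm{Plus}(\textrm{Start},\textrm{Start})$, and exploit the fact that any finite example set has a maximum that some constant exceeds, while no single constant exceeds every integer. Both arguments instantiate the same underlying idea --- a search space rich enough to meet any finite pointwise demand but too weak to meet the uniform demand --- and both verifications are one-liners. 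Your construction is the more elementary of the two (no conditionals or equality tests needed in the grammar, and the per-example witness is a single constant rather than a nested conditional), whereas the paper's example has the incidental virtue of closely resembling the grammars appearing in its benchmarks. One cosmetic point: you describe the term for $N$ as obtained by ``repeatedly applying $\textrm{Start} \to \textrm{Start}+\textrm{Start}$,'' which is fine, but for $N=1$ the leaf production alone suffices and no \textrm{Plus} node is needed; your side condition $N\ge 1$ already covers this.
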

Despite this negative result, we will show that a \cegis algorithm can prove unrealizability for many \sygus instances (\sectref{evaluation}).

%On the other hand, a sufficient condition of unrealizability is
%
%\begin{equation}
%\forall e\in L(G). \bigvee_{\overline{x}\in \cex} \neg\varphi(\overline{x})[f\leftarrow e],
%\label{eq:cex}
%\end{equation}
%The Formula~\ref{eq:unrealizable} states that a \sygus problem is unrealizable iff there exists a counterexample in the theory $T$ for each expression $e\in L(G)$ where, in Formula~\ref{eq:cex}, we also ask all these counterexamples can be chosen from a finite set $\cex$. Now our goal is to show that if Formula~\ref{eq:cex} is valid.
%
%
%
%
%The motivation of reformulating \sygus problems as reachability problems is that the goal  of reachability problems has a similar form as the sufficient condition~\ref{eq:cex} of unrealizability. 

%%% Local Variables: 
%%% mode: latex
%%% TeX-master: "main.tex"
%%% End: 

% -*- TeX-master: t; TeX-PDF-mode: t -*-

\section{From Unrealizability to Unreachability}
\label{Se:verification}

In this section, we show how a \sygus problem for finitely many
examples can be reduced to a reachability problem in a non-deterministic,
recursive program in an imperative programming language.

\subsection{Reachability Problems}
\label{Se:ReachabilityProblems}

A \emph{program} $P$ takes an initial state $I$ as input and outputs a
final state $O$, i.e., $\sem{P}(I)=O$ where $\sem{\cdot}$ denotes the
semantic function of the programming language.
As illustrated in \sectref{IllustrativeExample}, we allow a program to
contain calls to an external function \nd{}, which returns a non-deterministically
chosen Boolean value.
When program $P$ contains calls to \nd{}, we use $\hat{P}$ to denote
the program that is the same as $P$ except that $\hat{P}$ takes an
additional integer input $\texttt{n}$,
and each call $\nd{}$ is replaced by a call to a local function \nextbit{}
defined as follows:
\[
  \texttt{bool nextbit()\{bool b = n\%2; n=n>>1; return b;\}}.
\]
In other words, the integer parameter $\texttt{n}$ of $\hat{P}[\texttt{n}]$
formalizes all of the non-deterministic choices made by $P$ in calls to $\nd{}$.

For the programs $\pge{G}{E}$ used in our unrealizability algorithm, the only
calls to $\nd{}$ are ones that control whether or not a production is
selected from grammar $G$ during a top-down, left-to-right generation
of an expression-tree.
Given $\texttt{n}$, we can decode it to identify which expression-tree $\texttt{n}$
represents.

%\loris{add squares at the end of example enviroments and proofs}

\begin{example}\label{Exa:TermDecoding}
Consider again the \sygus problem $(\psi_{\texttt{max2}}(f,x,y), G_2)$ discussed in
\sectref{IllustrativeExample}.
In the discussion of the initial program $\pge{G_2}{E_1}$ (\figref{cexample0}), we hypothesized
that the program analyzer chose to report path (\ref{Eq:PathPlusXOne0}) in $P$,
for which the sequence of non-deterministic choices is $t, f, t, f, f, f, t$.
That sequence means that for $\hat{P}[\texttt{n}]$, the value of $\texttt{n}$ is
$1000101~(\textrm{base}~2)$ (or $69~(\textrm{base}~10)$).
The $1$s, from low-order to high-order position, represent choices
of production instances in a top-down, left-to-right generation
of an expression-tree.
(The $0$s represent rejected possible choices.)
The rightmost $1$ in $\texttt{n}$ corresponds to the choice in
\lineref{PlusBegin0} of ``\texttt{Start ::= Plus(Start, Start)}'';
the $1$ in the third-from-rightmost position corresponds to the choice
in \lineref{StartX0} of ``\texttt{Start ::= x}'' as the left child
of the \texttt{Plus} node;
and the $1$ in the leftmost position corresponds to the choice
in \lineref{StartOne0} of ``\texttt{Start ::= 1}'' as the right child.
By this means, we learn that the behavioral specification
$\psi_{\texttt{max2}}(f,x,y)$ holds for the
example set $E_1 = \{ (0,1) \}$
for $f \mapsto \texttt{Plus(x,1)}$. \qed
\end{example} 

\begin{definition}[Reachability Problem]
  \label{De:reachability}
  Given a program $\hat{P}[\texttt{n}]$, containing assertion
  statements and a non-deterministic integer input $\texttt{n}$, we use
  $\re_P$ to denote the corresponding reachability problem. 
  The reachability problem $\re_P$ is \textbf{satisfiable} if there exists a value $n$
  that, when bound to $\texttt{n}$, falsifies any of the assertions in $\hat{P}[\texttt{n}]$.
  The problem is \textbf{unsatisfiable} otherwise.
%	\begin{equation}
%	\exists n.\assert{$I$,$\sem{\hat{P}[n]}(I)$}.
%	\label{Eq:reachability}
%	\end{equation}
\end{definition}

\subsection{Reduction to Reachability}
\label{Se:EncodingAlgorithm}

The main component of our framework is an encoding $\en$ that
given a \sygus  problem  $\sy^\cex=(\psi^\cex(f,{x}), G)$ over  a  set
of examples $\cex = \{c_1,\ldots,c_k\}$, 
outputs a program $\pge{G}{\cex}$
such that 
$\sy^\cex$ is \textbf{realizable} if and only if  $\re_{\en(\sy, E)}$ is \textbf{satisfiable}.
In this section, we define all the components of $\pge{G}{\cex}$, and state
the correctness properties of our reduction.
\Omit{
\begin{changebar}
\sout{	(Proofs are found in \appref{proofs}.)}
\end{changebar}}

\mypar{Remark:} In this section, we assume  that in the specification 
$\psi(f,x)$  every occurrence of $f$ has $x$ as input parameter.
We show how to overcome this restriction in 
\iffull
	\sectref{complex-invocations}.
	\Omit{
\fi
	App. A \cite{long}.
\iffull
}
\fi
In the following, we assume that the input $x$ has type $\tau_I$, where $\tau_I$ could be a complex type---e.g., a tuple type.

\mypar{Program construction.} 
Recall that the grammar $G$ is a tuple $(N,\Sigma,S,a,\delta)$. 
First, for each non-terminal $A\in N$, 
the program $\pge{G}{\cex}$ contains $k$ global variables $\{\gv{1}{A},\dots,\gv{k}{A}\}$ of type $a(A)$ that are
used to express the values resulting from evaluating expressions generated from non-terminal $A$ on the
$k$ examples. 
Second, for each non-terminal $A\in N$, the program $\pge{G}{\cex}$ contains a function 
\[
  \texttt{void funcA(}\tau_I\texttt{ v1}, \dots, \tau_I\texttt{ vk)\{\ } \textit{~bodyA}\texttt{ \}}
\]

We denote by $\delta(A)=\{r_1,\dots,r_m\}$ the set of production rules of the form $A\to \beta$ in $\delta$. 
The body $bodyA$ of $\texttt{funcA}$ has the following structure:
\[
\begin{array}{l}
  \texttt{if(\nd{})}~\{\enprod(r_1)\} \\
  \texttt{else if(\nd{})}~\{\enprod(r_2)\} \\
  \ldots \\
  \texttt{else}~\{\enprod(r_m)\}
\end{array}
\]

The encoding $\enprod(r)$ of a production $r=A_0\to b^{(j)}(A_1,\cdots,A_j)$ is defined as follows ($\tau_i$ denotes the type of the term $A_i$):
\[
\begin{array}{l}
  \texttt{funcA1(v1,\dots,vk);}\\
  \tau_1\ \child{1}{1} = \gv{1}{A1}; \dots; \tau_1\ \child{1}{k} = \gv{k}{Aj};\\
  \ldots \\
  \texttt{funcAj(v1,\dots,vk);}\\
  \tau_j\ \child{j}1 = \gv{1}{A1}; \dots; \tau_j\ \child{j}{k} = \gv{k}{Aj};\\
  \gv{1}{A0} = \en_b^1(\child{1}{1},\dots,\child{1}{k})\\
  \ldots\\
  \gv{k}{A0} = \en_b^k(\child{j}{1},\dots,\child{j}{k})
\end{array}
\]
Note that if $b^{(j)}$ is of arity $0$---i.e., if $j=0$---the construction yields $k$ assignments of the form
$\gv{m}{A0} = \en_b^m()$.	

The function $\en^m_b$ interprets the semantics of $b$ on the $m^{\textrm{th}}$ input example. 
We take Linear Integer Arithmetic as an example to illustrate how $\en^m_b$ works.
\[
\begin{array}{r@{\hspace{1ex}}c@{\hspace{1ex}}l@{\hspace{9ex}}r@{\hspace{1ex}}c@{\hspace{1ex}}l}
  \en_{\textrm{0}^{(0)}}^m&:=& \texttt{0}           &    \en_{\textrm{1}^{(0)}}^m&:=& \texttt{1} \\
  \en_{\textrm{x}^{(0)}}^m&:=& \texttt{vi}          &    \en_{\textrm{Equals}^{(2)}}^m(L,R)&:=& (L \texttt{=} R) \\
%   \en_{\textrm{Not}^{(1)}}^m(L)&:=& !L    &  \en_{\textrm{Or}^{(2)}}^m(L,R)&:=& L||R \\
  \en_{\textrm{Plus}^{(2)}}^m(L,R)&:=& L\texttt{+}R  &  \en_{\textrm{Minus}^{(2)}}^m(L,R)&:=& L\texttt{-}R \\
  \multicolumn{6}{c}{\en_{\textrm{IfThenElse}^{(3)}}^m(B,L,R)~:=~\texttt{if}(B)\ L\texttt{ else}\ R}
\end{array}
\]
	
We now turn to the correctness of the construction.
First, we formalize the relationship between expression-trees in $L(G)$,
the semantics of $\pge{G}{\cex}$, and the number $\texttt{n}$.
Given an expression-tree $e$, we assume that each node $\node$ in $e$ is annotated with the production that has produced
that node.
Recall that $\delta(A)=\{r_1,\dots,r_m\}$ is the set of productions with head $A$
(where the subscripts are indexes in some arbitrary, but fixed order).
Concretely, for every node $\node$, we assume there is a function
$\produ(\node)=(A,i)$, which associates $\node$ with a pair that indicates
that non-terminal $A$ produced $n$ using the production $r_i$
(i.e., $r_i$ is the $i^{\textrm{th}}$ production whose left-hand-side non-terminal is $A$).

We now define how we can extract a number $\num(e)$ for which the program $\hat{P}[\num(e)]$ will 
exhibit the same semantics as that of the expression-tree $e$.
First, for every node $\node$ in $e$ such that $\produ(\node)=(A,i)$, we define the following number:
\[
\numn(\node)=
\begin{cases}
        1\underbrace{0\cdots0}_{i-1}   & \quad \textrm{if}~i < |\delta(A)| \\
	 \underbrace{0\cdots0}_{i-1}    & \quad \textrm{if}~i = |\delta(A)|.
        \end{cases}
\]
The number $\numn(\node)$ indicates what suffix of the value of
$\texttt{n}$ will cause $\texttt{funcA}$ to trigger the code
corresponding to production $r_i$.
Let $\node_1\cdots \node_m$ be the sequence of nodes visited during a
pre-order traversal of expression-tree $e$.
The number corresponding to $e$, denoted by $\num(e)$, is defined
as the bit-vector $\numn(\node_m)\cdots \numn(\node_1)$.

Finally, we add the entry-point of the program, which calls
the function $\texttt{funcS}$ corresponding to the initial
non-terminal $S$, and contains the assertion that encodes our
reachability problem on all the input examples $\cex = \{c_1,\ldots,c_k\}$.
\[
\begin{array}{l}
  \texttt{void main()}\{\\
  ~~~~\tau_I\texttt{ x1 = }c_1\texttt{;} \cdots\texttt{;}\tau_I\ \texttt{xk = }c_k\texttt{;}\\
  ~~~~\texttt{funcS(}\texttt{x1,}\dots\texttt{,}\texttt{ xk);}\\ 
  ~~~~\texttt{assert}\ \bigvee_{1\leq i\leq k} \neg \psi(f,c_i)[\gv{i}{S}/f(x)] \texttt{; // At least one $c_i$ fails} 
  ~\texttt{\}}
\end{array}
\]

%\loris{assertion notation needs to be fixed once we decide the final def of reachability}
%\mypar{Assertion $\texttt{assert}_{\psi(f,x)}$.}
%The assertion $\texttt{assert}_{\psi(f,x)}$ is defined by following predicate:
%\[
%	\bigvee_{1\leq i\leq k} \neg(\psi(f,x)[\gv{i}{S}/f(x), \texttt{xi}/x])
%\]

\mypar{Correctness.}
We first need to show that the function $\num(\cdot)$ captures
the correct language of expression-trees.
Given a non-terminal $A$, a value $n$, and input values
$i_1,\ldots,i_k$, we use
$\sem{\texttt{funcA}[n]}(i_1,\dots,i_k)=(o_1,\ldots o_k)$ to
denote the values of the variables $\{\gv{1}{A},\dots,\gv{k}{A}\}$ at
the end of the execution of $\texttt{funcA}[\texttt{n}]$ with
the initial value of $\texttt{n} = n$ and input values $x_1,\ldots,x_k$.
Given a non-terminal $A$, we write $L(G,A)$ to denote the set of terms
that can be derived starting with $A$.

\begin{lemma}
\label{lem:soundness-encoding}
Let $A$ be a non-terminal,  $e\in L(G,A)$ an expression, and $\{i_1,\dots,i_k\}$ an input set. 
Then, $(\sem{e}(i_1),\dots,\sem{e}(i_k))=\sem{\texttt{funcA}[\num(e)]}(i_1,\dots,i_k)$.
\end{lemma}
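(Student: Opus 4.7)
The plan is to prove \lemref{soundness-encoding} by structural induction on the expression-tree $e$, exploiting the correspondence between (i) the pre-order traversal of $e$ used to define $\num(e)$ and (ii) the order in which \texttt{nextbit()} consumes bits of $\texttt{n}$ during the execution of $\texttt{funcA}[\num(e)]$. Throughout, I will carry along all $k$ example inputs simultaneously, since the body of each \texttt{funcA} threads them through the same non-deterministic choices by construction.

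First I would fix notation. Let $e\in L(G,A)$ have root node $\node$ with $\produ(\node)=(A,i)$, so the root is produced by $r_i : A\to b^{(j)}(A_1,\dots,A_j)$. Let $e_1,\dots,e_j$ be the immediate subtrees, with $e_\ell\in L(G,A_\ell)$. By construction of $\num(\cdot)$ on a pre-order traversal, the low-order bits of $\num(e)$ are exactly $\numn(\node)$, and the remaining high-order bits form the concatenation $\num(e_j)\cdots\num(e_1)$ (with $\num(e_1)$ appearing immediately above $\numn(\node)$). The key arithmetic fact I would isolate as a sublemma is that, after \texttt{funcA} has consumed the $|\numn(\node)|$ bits selecting branch $r_i$, the remaining value of $\texttt{n}$ equals $\num(e_j)\cdots\num(e_1)$ interpreted as an integer.

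Base case: $j=0$, so $e$ is a single node produced by some nullary $r_i : A\to b^{(0)}$. Then $\num(e)=\numn(\node)$, which by definition either selects the $i$-th \texttt{if} branch (when $i<|\delta(A)|$) or falls through to the final \texttt{else} (when $i=|\delta(A)|$). In either case, $\enprod(r_i)$ executes the $k$ assignments $\gv{m}{A}=\en_b^m()$ for $m=1,\dots,k$, and by definition of $\en_b^m$ we have $\en_b^m() = \sem{b}(i_m)$, matching the claim.

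Inductive step: assume the statement for each $e_\ell$. Executing $\texttt{funcA}[\num(e)]$ first consumes $\numn(\node)$ through \texttt{nextbit()} calls in the \texttt{if/else if} cascade, which by the choice of $\numn(\node)$ selects the branch $\enprod(r_i)$. The remaining value of $\texttt{n}$, by the sublemma above, is $\num(e_j)\cdots\num(e_1)$. The body $\enprod(r_i)$ calls $\texttt{funcA1}(v_1,\dots,v_k)$ first; crucially, since $\num(e_1)$ sits in the lowest-order position of the remaining bits, the call executes as $\texttt{funcA1}[\num(e_1)]$, and upon return $\texttt{n}$ holds $\num(e_j)\cdots\num(e_2)$. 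By the induction hypothesis, the globals $\gv{m}{A1}$ then hold $\sem{e_1}(i_m)$, which are immediately copied into $\child{1}{m}$. Iterating this argument left-to-right for $\ell=2,\dots,j$ gives $\child{\ell}{m}=\sem{e_\ell}(i_m)$ for all $\ell,m$. The final $k$ assignments set $\gv{m}{A}=\en_b^m(\child{1}{m},\dots,\child{j}{m})$, which by definition of $\en_b^m$ evaluates $b$ on $(\sem{e_1}(i_m),\dots,\sem{e_j}(i_m))=\sem{e}(i_m)$, completing the induction.

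The main obstacle I anticipate is the bit-consumption bookkeeping in the sublemma: one must verify that the non-deterministic cascade in \texttt{funcA} consumes exactly $|\numn(\node)|$ bits (which requires a small case split on whether the selected production is the last one, where $\numn$ is shorter by one bit because no final \texttt{nextbit()} call is executed), and then check that the left-to-right ordering of recursive calls in $\enprod(r_i)$ aligns with the right-to-left concatenation order used in the definition of $\num(e)$. Once this alignment is pinned down, the induction is routine.
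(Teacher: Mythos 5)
Your proposal is correct and follows essentially the same route as the paper's proof: structural induction on $e$, using the decomposition $\num(e)=\num(e_j)\cdots\num(e_1)\,\numn(\node)$ so that the branch-selection bits are consumed first and each recursive call $\texttt{funcAl}$ sees $\num(e_l)$ as the low-order suffix, with the induction hypothesis supplying the child values. Your explicit bit-consumption sublemma (including the case split on whether the selected production is the last one) is bookkeeping the paper leaves implicit, but it does not change the argument.
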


Each procedure $\texttt{funcA}[\texttt{n}](i_1,\dots,i_k)$ that we
construct has an explicit dependence on variable $\texttt{n}$, where
$\texttt{n}$ controls the non-deterministic choices made by the
$\texttt{funcA}$ and procedures called by $\texttt{funcA}$.
As a consequence, when relating numbers and expression-trees,
there are two additional issues to contend with:
\begin{description}
  \item[Non-termination.]
    \label{It:Nontermination}
    Some numbers can cause $\texttt{funcA}[n]$ to fail to terminate.
    For instance, if the case for ``\texttt{Start ::= Plus(Start, Start)}''
    in program $\pge{G_2}{E_1}$ from \figref{cexample0} were moved from the first
    branch (\lineseqref{PlusBegin0}{PlusEnd0}) to the final else case (\lineref{StartZero0}),
    the number $\texttt{n} = 0 = \ldots0000000~(\textrm{base}~2)$ would cause
    \texttt{Start} to never terminate, due to repeated selections of \texttt{Plus}
    nodes.
    However, note that the only assert statement in the program is placed at
    the end of the main procedure.  
    Now, consider a value of $n$ such that $\re_{\en(\sy, E)}$ is satisfiable.
    \defref{reachability} implies that the flow of control will reach and
    falsify the assertion, which implies that $\texttt{funcA}[{n}]$ terminates.
\footnote{If the \sygus problem deals with the synthesis of programs for a language
      that can express non-terminating programs, that would be an additional
      source of non-termination, different from that discussed in item
      	\textbf{Non-termination}.
      That issue does not arise for LIA \sygus problems. Dealing with the more general kind of non-termination
      is postponed for future work.
    }
  \item[Shared suffixes of sufficient length.]
    \label{It:SharedSuffixes}
    In \exref{TermDecoding}, we showed how for program $\pge{G_2}{E_1}$ (\figref{cexample0})
    the number $\texttt{n} = 1000101~(\textrm{base}~2)$ corresponds to the 
    top-down, left-to-right generation of \texttt{Plus(x,1)}.
    That derivation consumed exactly seven bits;
    thus, any number that, written in $\textrm{base}~2$, shares the suffix $1000101$---e.g.,
    $11010101000101$---will also generate \texttt{Plus(x,1)}.
\end{description}

%These issues are addressed in the following lemma:

The issue of shared suffixes is addressed in the following lemma:
\begin{lemma}
\label{lem:completeness-encoding}
For every non-terminal $A$ and number $\texttt{n}$ such that
$\sem{\texttt{funcA}[{n}]}(i_1,\dots,i_k) \neq \bot$ (i.e.,
$\texttt{funcA}$ terminates when the non-deterministic choices are
controlled by $\texttt{n}$), there exists a minimal $\texttt{n}'$ that
is a ($\textrm{base}~2$) suffix of $\texttt{n}$ for which
(i) there is an $e \in L(G)$ such that $\num(e) = {n}'$, and
(ii) for every input $\{i_1,\dots,i_k\}$,
we have $\sem{\texttt{funcA}[{n}]}(i_1,\dots,i_k)=\sem{\texttt{funcA}[{n}']}(i_1,\dots,i_k)$.
\end{lemma}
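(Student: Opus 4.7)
The plan is to prove the lemma by structural induction on the (finite) execution trace of $\texttt{funcA}[n](i_1,\dots,i_k)$, which is finite precisely because we assume $\sem{\texttt{funcA}[n]}(i_1,\dots,i_k) \neq \bot$. The key conceptual point is that the only way $n$ influences the computation is through calls to $\texttt{nextbit()}$, which consume successive low-order bits of $n$ to decide which branch of the \texttt{if}-cascade in some $\texttt{funcA}$ is taken. If the execution terminates, only finitely many bits of $n$ are ever consumed, so everything above the highest consumed bit is irrelevant, and the consumed prefix (reading from low to high) determines both the path and a derivation in $G$.

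First I would formalize the bit-consumption semantics: each invocation of $\texttt{funcA}$ reads some number $k_A$ of bits to select a production $r_i$ with left-hand side $A$, where $k_A = i$ if $i < |\delta(A)|$ and $k_A = i-1$ if $i = |\delta(A)|$; this is exactly $\numn(q)$ for the corresponding node $q$, interpreted as a bit-string. Base case: $A \to b^{(0)}$ is an arity-$0$ production; then $\texttt{funcA}$ consumes exactly $\numn(q)$ bits, performs the $k$ assignments from $\en_b^m$, and returns. Setting $e = b$ gives $\num(e) = \numn(q)$, which is a suffix of $n$, and both executions produce the same outputs on any inputs because they execute the same straight-line code.

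Inductive step: Suppose $\texttt{funcA}$ selects $r_i = A_0 \to b^{(j)}(A_1,\dots,A_j)$, consuming $\numn(q)$ bits of $n$ for the selection. It then calls $\texttt{funcA}_1$, which, by the induction hypothesis applied to the remaining bits, terminates, has a minimal suffix $n_1'$ that encodes some $e_1 \in L(G, A_1)$, and produces the same globals as $\texttt{funcA}_1[n_1']$ on any input. Repeating the argument for the subsequent calls to $\texttt{funcA}_2,\dots,\texttt{funcA}_j$ on the bits that remain after each prior call, I obtain expressions $e_2,\dots,e_j$ and minimal suffixes $n_2',\dots,n_j'$. Setting $e := b^{(j)}(e_1,\dots,e_j)$, the pre-order traversal of $e$ visits $q, q_1, \dots, q_j$-subtrees in order, so by the definition of $\num(\cdot)$ as the concatenation $\numn(q_m)\cdots\numn(q_1)$, the bit-string $\num(e)$ equals $\num(e_j)\cdots\num(e_1)\,\numn(q)$ read from high to low, which is exactly the (minimal) suffix of $n$ consisting of all bits consumed by the execution. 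Property (ii) then follows because $\texttt{funcA}[n]$ and $\texttt{funcA}[n']$ execute the same sequence of statements (they make the same non-deterministic choices at every point) and therefore assign the same values to the globals $\gv{m}{A_0}$.

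The main obstacle I anticipate is purely bookkeeping: aligning the bit-order convention (low-order bits consumed first by $\texttt{nextbit}$) with the definition of $\num(e) = \numn(q_m)\cdots\numn(q_1)$ indexed by the pre-order traversal, and arguing carefully that minimality is preserved through the concatenation so that the $n'$ produced is indeed the shortest suffix whose execution agrees with $n$'s. Non-termination is not an obstacle here because the hypothesis $\sem{\texttt{funcA}[n]}(i_1,\dots,i_k) \neq \bot$ excludes it, and the ``shared-suffix'' phenomenon described before the lemma is exactly what licenses the reduction from $n$ to the minimal suffix $n'$.
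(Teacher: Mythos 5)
Your proposal is correct and rests on the same key observation as the paper's proof: since the execution terminates, only the finitely many bits consumed by \texttt{nextbit()} matter, and that consumed bit-string is exactly $\num(e)$ for the expression-tree $e$ generated top-down, left-to-right by those choices, which gives both the minimal suffix $n'$ and the agreement of the two executions. The only difference is presentational---the paper asserts this correspondence directly from the global bit sequence $b_1,\dots,b_j$, whereas you establish it by structural induction on the call tree, which merely spells out the same argument in more detail.
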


We are now ready to state the correctness property of our construction.
\begin{theorem}
\label{thm:correctness-encoding}
Given a \sygus  problem $\sy^\cex=(\psi_E(f,x), G)$ over a finite set of examples $\cex$, 
		 the problem $\sy^\cex$ is \textbf{realizable}  iff $ \re_{\en(\sy, E)}$ is \textbf{satisfiable}.
\end{theorem}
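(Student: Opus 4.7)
The plan is to prove the two directions of the iff separately, in each case reducing the work to the two preceding lemmas (soundness-encoding and completeness-encoding) plus the definitions of $\num(\cdot)$, the assertion at the entry point, and \defref{reachability}.

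For the forward direction (realizable $\Rightarrow$ satisfiable), I would start from a witness expression $e \in L(G)$ such that $\psi(\sem{e}, c_i)$ holds for every example $c_i \in E$. The natural candidate for the non-deterministic choice vector is $n := \num(e)$. Applied to the entry point, this choice drives $\texttt{funcS}[\num(e)](c_1,\ldots,c_k)$; by \lemref{soundness-encoding} instantiated with $A = S$, after this call the global variables satisfy $\gv{i}{S} = \sem{e}(c_i)$ for each $i$. Substituting into the assertion at the end of \texttt{main} then makes every disjunct $\neg\psi(f,c_i)[\gv{i}{S}/f(x)]$ false, so the assertion is violated, and $\re_{\en(\sy,E)}$ is satisfiable by \defref{reachability}.

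For the backward direction (satisfiable $\Rightarrow$ realizable), I would unpack \defref{reachability} to obtain a value $n$ for which the execution of $\hat{P}[n]$ reaches and falsifies the assertion. Since the assertion statement is the only one in the program and sits at the end of \texttt{main}, the control flow must have terminated, so in particular $\texttt{funcS}[n](c_1,\ldots,c_k) \neq \bot$. I then invoke \lemref{completeness-encoding} to extract a minimal suffix $n'$ of $n$ for which there exists $e \in L(G)$ with $\num(e) = n'$ and $\sem{\texttt{funcS}[n]}(c_1,\ldots,c_k) = \sem{\texttt{funcS}[n']}(c_1,\ldots,c_k)$. Combining this with \lemref{soundness-encoding} yields $\gv{i}{S} = \sem{e}(c_i)$ for each $i$ at the moment the assertion is tested. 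The assertion was falsified, so every disjunct $\neg\psi(f,c_i)[\gv{i}{S}/f(x)]$ is false, i.e., $\psi(\sem{e},c_i)$ holds for all $c_i \in E$, which exhibits $e$ as a solution to $\sy^E$.

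The only non-routine step is making sure the two lemmas compose cleanly at the level of the entry point, which uses the initial non-terminal $S$ and the concrete inputs $c_1,\ldots,c_k$ rather than arbitrary $i_1,\ldots,i_k$; this is immediate by specialization. The main conceptual obstacle, which is already handled by \lemref{completeness-encoding}, is the shared-suffix phenomenon described in \exref{TermDecoding}: a single $n$ that witnesses satisfiability need not literally equal $\num(e)$ for any $e$, because high-order bits of $n$ beyond the pre-order encoding of the expression-tree are never read. Once the minimal-suffix $n'$ is in hand, both lemmas line up and the theorem follows without further calculation. Non-termination of \texttt{funcS} on other values of $n$ is irrelevant, since the hypothesis that the assertion is reached already forces termination along the execution we analyze.
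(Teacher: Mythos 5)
Your proposal is correct and follows essentially the same route as the paper's proof: the forward direction instantiates Lemma~\ref{lem:soundness-encoding} at $n=\num(e)$ to falsify the assertion, and the backward direction uses Lemma~\ref{lem:completeness-encoding} to recover a minimal suffix $n'$ and a witness $e$ with $\num(e)=n'$. Your version is marginally more explicit than the paper's in two places---noting that reaching the assertion forces termination of \texttt{funcS} (needed to apply Lemma~\ref{lem:completeness-encoding}) and re-invoking Lemma~\ref{lem:soundness-encoding} to connect $n'$ back to $\sem{e}$---but these are details the paper leaves implicit, not a different argument.
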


%%% Local Variables: 
%%% mode: latex
%%% TeX-master: "main.tex"
%%% End: 

% -*- TeX-master: t; TeX-PDF-mode: t -*-

\section{Implementation and Evaluation}
\label{Se:evaluation}

\name is a tool that can return two-sided answers to unrealizability
problems of the form $\sy = (\psi, G)$.
When it returns \emph{\textbf{unrealizable}}, no expression-tree in $L(G)$ satisfies $\psi$;
when it returns \emph{\textbf{realizable}}, some $e \in L(G)$ satisfies $\psi$;
\name can also time out.
\name incorporates several existing pieces of software.
\begin{enumerate}
  \item
    The (un)reachability verifier \seahorn is applied to the
    reachability problems of the form $\re_{\en(\sy,\cex)}$ created during
    successive \cegis rounds.
  \item
    The SMT solver \z3 is used to check whether a generated expression-tree $e$
    satisfies $\psi$.
    If it does, \name returns \emph{\textbf{realizable}} (along with $e$);
    if it does not, \name creates a new input example to add to $\cex$.
\end{enumerate}
It is important to observe that \seahorn, like most reachability
verifiers, is only sound for \emph{\textbf{un}}satisfiability---i.e., if
\seahorn returns \emph{\textbf{unsatisfiable}}, the reachability
problem is indeed unsatisfiable.
Fortunately, \seahorn's one-sided answers are in the correct direction
for our application:
to prove unrealizability, \name only requires the reachability
verifier to be sound for unsatisfiability.

There is one aspect of \name that differs from the technique that has been
presented earlier in the paper.
While \seahorn is sound for \emph{\textbf{un}}reachability, it is not sound for
reachability---i.e., it cannot soundly prove whether a synthesis problem is realizable.
To address this problem, to check whether a given \sygus problem $\sy^\cex$ is realizable on the finite set
of examples $\cex$, \name also calls the \sygus solver ESolver
\cite{alur2016sygus} to synthesize an expression-tree $e$ that satisfies
$\sy^\cex$.\footnote{We chose ESolver because on the benchmarks we considered, ESolver outperformed
  other \sygus solvers (e.g., CVC4~\cite{cvc4}).
}
%If ESolver succeeds in synthesizing an expression-tree $e$, the problem $\sy^\cex$ is realizable
%and if \seahorn succeeds in proving unreachability, the problem $\sy^\cex$ is unrealizable.

In practice, for every intermediate problem $\sy^\cex$ generated by
the \cegis algorithm, \name runs the ESolver on $sy^\cex$ and \seahorn
on $\re_{\en(\sy,\cex)}$ in \emph{parallel}.
If ESolver returns a solution $e$, \seahorn is interrupted, and Z3 is
used to check whether $e$ satisfies $\psi$.
Depending on the outcome, \name either returns \emph{\textbf{realizable}}
or obtains an additional input example to add to $\cex$.
If \seahorn returns \emph{\textbf{unsatisfiable}}, \name returns \emph{\textbf{unrealizable}}.

Modulo bugs in its constituent components, \name is sound for both
realizability and unrealizability, but because of 
Lemma~\ref{lem:incompleteness-cegis} and the incompleteness of \seahorn, \name is not complete for unrealizability.

%\noindent
%We would also like to understand how performance relates to the
%number of examples used by \name.
%
%\begin{resq}%{TimeAsAFunctionOfExamples}
%  How does the time that \name uses to prove unrealizability
%  relate to the number of examples used? (\sectref{EQ3})
%\end{resq}

\smallskip
\noindent\textbf{Benchmarks.}
We perform our evaluation on \numBenchmarks variants of the 
60 LIA benchmarks from the LIA \sygus competition track~\cite{alur2016sygus}.
We do not consider the other \sygus benchmark track, Bit-Vectors, because
the \seahorn verifier is unsound for most bit-vector operations--e.g., bit-shifting.
We used three suites of benchmarks.
%\begin{enumerate}
%	\item 
\textsc{LimitedIf} (resp.\ \textsc{LimitedPlus}) contains \numIfBenchmarks
(resp.\ \numPlusBenchmarks) benchmarks in which the grammar bounds the number
of times  an \textrm{IfThenElse} (resp.\ \textrm{Plus}) operator can appear
in an expression-tree to be $1$ less than the number
required to solve the original synthesis problem. 
We used the tool
  \textsc{Quasi} to  automatically generate the restricted grammars.
\textsc{LimitedConst} contains \numConstBenchmarks benchmarks in which the grammar
allows the program to contain only constants that are coprime to any constants
that may appear in a valid solution---e.g., the solution requires using odd numbers,
but the grammar only contains the constant $2$.
The numbers of benchmarks in the three suites differ because for certain benchmarks
it did not make sense to create a limited variant---e.g., if the smallest program
consistent with the specification contains no \textrm{IfThenElse} operators,
no variant is created for the \textsc{LimitedIf} benchmark.
In all our benchmarks, the grammars describing the search space contain infinitely many terms.
%	\sout{We omit benchmarks for which the tool \textsc{Quasi}~\cite{HuD18}
%ran out of memory when generating the grammar required by the
%benchmark.}

%	\item \textsc{Finite} contains \numFiniteBenchmarks where the grammar only generates finitely many expressions 	
%		 and there is no valid solution in the search space.
	%\item \textsc{BVector} contains \numBvBenchmarks benchmarks with background theory Bit-Vectors. 
%	\item \textsc{Invariant} \heping{not sure}
%\end{enumerate}
Our experiments were performed on an Intel Core i7 4.00GHz CPU, with 32GB of RAM,
running Lubuntu 18.10 via VirtualBox.
We used version 4.8 of Z3,
commit 97f2334 of \seahorn, and
commit d37c50e of ESolver.
The timeout for each individual \seahorn/ESolver call
is set at 10 minutes.

%We will discuss the 8 benchmarks with multiple objectives in Section~\ref{subsec:multiobjeval}.
%\heping{\sout{The full description of our benchmarks is given in Appendix~\ref{sec:app:benchmarks}.}}
%\heping{The full description of our benchmarks and }the intermediate \sygus files generated by our experiments
%are available at \url{https://tinyurl.com/y7xjwqkf}. 

%\subsubsection*{EQ1: What is the Effectiveness and Speed of NOPE?}
%\label{Se:EQ1}
%\label{Se:EQ1}

%\input{totalvsSeahorn}

\Omit{
The detailed results of running \name on our benchmarks are shown in
\appref{supp-eval}. }

\smallskip
\noindent\textbf{Experimental Questions.}
Our experiments were designed to answer the questions posed below.
\begin{resq}%{ProveUnrealizability}
	Can \name prove unrealizability for variants of real \sygus
	benchmarks, and how long does it take to do so? %(\sectref{EQ1})
\end{resq}

\finding{Finding: \name can prove unrealizability for $\numSolved/\numBenchmarks$ benchmarks.}
For the $\numSolved$ benchmarks solved by \name, the average time taken is $\avgtime$s.
The time taken to perform the last iteration of the algorithm---i.e., the time taken by \seahorn to
return \textbf{unsatisfiable}---accounts for $87\%$ of the total running time.
%We describe the specific qualitative results in detail.

\name can solve all of the \textsc{LimitedIf} benchmarks for
which the grammar allows at most one \textrm{IfThenElse} operator.
Allowing more \textrm{IfThenElse} operators in the grammar leads to
larger programs and larger sets of examples, and consequently the
resulting reachability problems are harder to solve for \seahorn.

For a similar reason, \name can solve only one of the \textsc{LimitedPlus} benchmarks. 
All other \textsc{LimitedPlus} benchmarks allow $5$ or more
\textrm{Plus} statements, resulting in grammars that have at least $130$
productions.

\name can solve all \textsc{LimitedConst} benchmarks because these 
require few examples and result in small encoded programs.
%Remarkably, for 8 \textsc{LimitedConst} benchmarks, one example
%is enough to prove unrealizability;
%the reason is that valid solutions to the specifications are constants
%that are coprime to the constant values that the restricted grammars can generate.
%\tom{Is the previous sentence stated correctly?}

%
%In the category \textsc{BVector}, \name can only solve \heping{??} simple benchmarks and fails on the real benchmarks from the \sygus competition due to that the bit-wise operators, for example bit-wise negation and bit shift, are not supported well in \seahorn.

%
%
%Figure~\ref{fig:totalvsseahorn} shows the comparison between the total solving time and time used to prove unreachability of the last reachability problem by \seahorn. One can observe that points in the figure are almost diagonal which means that \name spend most of time to solve the reachability problem in the last iteration.
%\loris{add diagonals to figures}

%\subsubsection*{EQ2: How Many Examples Are Used to Prove Unrealizability and 
%How do Examples Affect Performance?}
%\label{Se:EQ2}
\begin{resq}%{HowManyExamplesNeeded}
	How many examples does \name use to prove unrealizability and
	how does the number of examples affect the performance of \name? %(\sectref{EQ2})
\end{resq}
\vspace{-2mm}\noindent
Note that Z3 can produce different models for the same query, and thus different runs of NOPE can produce different sequences of example. Hence, there is no guarantee that NOPE finds a good sequence of examples that prove unrealizability.
One measure of success is whether \name is generally able to find
a small number of examples, when it succeeds in proving unrealizability.

\finding{Finding: Nope used 1 to 9 examples to prove unrealizability
for the benchmarks on which it terminated.}
Problems requiring large numbers of examples could not be solved because
either ESolver or \seahorn timeouts---e.g.,
on the problem max4, \name gets to the point where the \cegis loop has
generated 17 examples, at which point ESolver exceeds the timeout threshold.

\pgfplotsset{every axis/.append 
style={	
label style={font=\scriptsize}, 
tick label style={font=\scriptsize}}
}
\begin{wrapfigure}{r}{.35\textwidth}
\centering
\vspace{-7mm}
\hspace{-1mm}
\begin{tikzpicture}[scale=1,every mark/.append style={mark size=1pt}]
\begin{axis}[%
ymode=log,
ylabel absolute, ylabel style={yshift=-3mm},
xlabel absolute, xlabel style={yshift=1mm},
xtick =data,
xtick={2,3,4,7,8,9},
width=\linewidth,
height=0.8\linewidth,
scatter/classes={%
c={mark=*,draw=black,
	mark options={solid, mark size=2pt},
	style={solid, fill=white}, mark size =1.2pt},
d={mark=diamond*,draw=black,
	mark options={dashed},
	style={solid, fill=white},mark size = 1.5pt},
a={mark=*,draw=black,
	mark options={solid, mark size=2pt},
	style={solid, fill=white}, mark size =1.2pt},
b={mark=*,draw=black,
	mark options={solid, mark size=2pt},
	style={solid, fill=white}, mark size =1.2pt}},
xmin = 0,
xmax = 9,
ymin = 0,
ymax = 400,
xlabel={examples},
ylabel={time (s)}]
\addplot[scatter,only marks,%
scatter src=explicit symbolic]%
table[meta=label] {
x y label
2	0.78	c
3	1.26	c
3	1.25	c
3	1.01	c
3	0.87	c
3	0.85	c
3	0.97	c
3	0.7	c
3	0.8	c
3	1.09	c
3	1.13	c
3	0.73	c
3	0.77	c
3	1.06	c
2	1.3	c
2	1.46	c
2	1.31	c
2	1.28	c
2	2.52	c
2	1.35	c
2	1.41	c
2	1.43	c
2	2.37	c
2	1.56	c
2	0.76	c
2	1.87	c
2	1.33	c
2	1.53	c
2	1.5	c
2	1.44	c
2	2.29	c
2	0.87	c
1	0.36	c
4	0.5	c
1	0.57	c
1	0.44	c
1	0.99	c
6	3.08	c
4	2.49	c
4	1.83	c
4	24.18	c
1	0.33	c
1	0.41	c
1	0.47	c
1	0.74	c
3	0.69	b
4	1.48	a
9	58.57	a
3	0.17	a
3	0.21	a
7	87.92	a
7	118.77	a
7	112.78	a
3	1.12	a
2	0.43	a
2	0.49	a
2	0.46	a
2	0.58	a
8	369.57	a
};
%\addplot+[line width=1pt,mark size=.5pt] coordinates {(0,0) (300000,300000)};
%\addplot[color=red] coordinates { (1,0) (3,1) (6, 3) (9, 9) (10,15) (11,25) (12,40) (13, 60) (14,85) (16,140) (20,250)(21,280) (22,320)};
%\node [below, xshift=-1ex] at (axis cs:25.69,500) {\small TO};
\end{axis}
\end{tikzpicture}
\caption{Time vs examples.}
\vspace{-9mm}
\label{fig:extime}
\end{wrapfigure}
\finding{Finding: The number of examples required to prove unrealizability
depends mainly on the arity of the synthesized function and the complexity of the grammar.}
The number of examples seems to grow quadratically with the number of bounded operators allowed in the grammar. 
In particular, problems in which the grammar allows zero
\textrm{IfThenElse} operators require 2--4 examples, while
problems in which the grammar allows one \textrm{IfThenElse}
operator require 7--9 examples.

%\subsection{EQ3: How Are Time and Number of Examples Used Related?}
%\label{Se:EQ3}

Figure~\ref{fig:extime} plots the running time of \name against the
number of examples generated by the \cegis algorithm.
\finding{Finding: The solving time appears to grow exponentially with the number of examples} required to prove unrealizability.
%In fact, benchmarks aggregated to two group: 0 or 1 if-operators allowed in the grammar. 

%For the group with 2, 3 or 4 input examples, benchmarks can be solved in less than 10 seconds (avg: 0.5s). In the other group, benchmarks with 7, 8, or 9 input examples can be solved in more than 50 but less than 400 seconds (avg: \heping{??}s). 

%%% Local Variables: 
%%% mode: latex
%%% TeX-master: "main.tex"
%%% End: 

% -*- TeX-master: t; TeX-PDF-mode: t -*-

\section{Related Work}
\label{Se:RelatedWork}

%\mypar{\sygus.}
The \sygus formalism was introduced  as a unifying framework to
express several synthesis problems~\cite{sygus}.
Caulfield et al.~\cite{CaulfieldRST15} proved that it is undecidable to determine whether
a given \sygus problem is realizable.
Despite this negative result, there are  several \sygus solvers that 
compete in yearly \sygus competitions~\cite{alur2016sygus} and can
efficiently produce solutions to \sygus problems when a solution exists.
Existing \sygus synthesizers fall into three categories: 
\rone Enumeration solvers enumerate programs with respect to a given total order~\cite{esolver}. 
If the given problem is unrealizable, these solvers 
typically only terminate if the 
language of the grammar is finite or contains finitely many functionally distinct programs.
While in principle certain enumeration solvers can prune infinite portions of the search space,
none of these solvers could prove unrealizability for any of the benchmarks considered in this paper.
\rtwo Symbolic solvers  reduce the synthesis problem to a constraint-solving problem~\cite{cvc4}. 
These solvers cannot reason about grammars that restrict allowed terms, and resort to enumeration whenever the candidate solution produced by the constraint solver is not in the restricted search space.
%These solvers cannot reason about different grammars, and resort to enumeration whenever the program
%produced by the constraint solver is not in the given input grammar.
Hence, they also cannot prove unrealizability.
\rthree Probabilistic synthesizers randomly search the search space,
and are typically unpredictable~\cite{Schkufza0A16}, providing no guarantees in terms of unrealizability. 
%Since the introduction of the \sygus format~\cite{sygus}, these techniques have been used to
%build several \sygus solvers that 
%have competed in \sygus competitions~\cite{alur2016sygus}.
\Omit{
\begin{changebar}
\sout{To the best of our knowledge, this paper presents the first technique
for proving that a \sygus problem is unrealizable when no solution exists. 
}\end{changebar}}

%\loris{move this to sec 4 and tie to theorem}
%On the theoretical front, Caulfiled et al.~\cite{CaulfieldRST15} showed that
%certain undecidable \sygus problems become decidable when adopting tree grammars instead of 
%context free grammars. 

\mypar{Synthesis as Reachability.}
CETI~\cite{Nguyen18} introduces a technique for encoding
template-based synthesis problems as reachability problems.
%Our technique differs from CETI in two main aspects. 
The CETI encoding only applies to the specific setting in which
\rone the search space is described by an imperative program with a
\emph{finite number} of holes---i.e., the values that the synthesizer
has to discover---and \rtwo the specification is given
as a finite number of input-output test cases with which the target program should agree.
Because the number of holes is finite, and all holes correspond to
values (and not terms), the reduction to a reachability problem only
involves making the holes global variables in the program (and no more
elaborate transformations).

In contrast, our reduction technique handles search spaces that are
described by a grammar, which in general consist of an infinite set of
terms (not just values).
Due to this added complexity, our encoding has to account for
(i) the semantics of the productions in the grammar, and
(ii) the use of non-determinism to encode the choice of grammar productions.
Our encoding creates one expression-evaluation computation for each of
the example inputs, and threads these computations through the program
so that each expression-evaluation computation makes use of the
\emph{same} set of non-deterministic choices.

Using the input-threading, our technique can handle
specifications that contain nested calls of the synthesized program
 (e.g., $f(f(x)) = x$). ( 
\iffull
\sectref{complex-invocations}.)
\Omit{
	\fi
App. A \cite{long}.)
\iffull
}
\fi

The input-threading technique builds a \emph{product program} that perform multiple executions of the same function as done in relational program verification \cite{barthe2011relational}. Alternatively, a different encoding could use 
multiple function invocations on individual inputs and require the verifier to thread the  same bit-stream for all input  evaluations. In general, verifiers perform much better on product programs~\cite{barthe2011relational}, which motivates
our choice of encoding.

\mypar{Unrealizability in Program Synthesis.}
For certain synthesis problems---e.g., reactive
synthesis~\cite{Bloem15}---the realizability problem is decidable.
The framework tackled in this paper, \sygus, is orthogonal to such problems,
and it is undecidable to check whether a given \sygus problem is realizable~\cite{CaulfieldRST15}.

Mechtaev et al.~\cite{Mechtaev18} propose to use a variant of \sygus to efficiently prune irrelevant paths
 in a symbolic-execution engine.
In their approach, for each path $\pi$ in the program, a synthesis
problem $p_\pi$ is generated so that if $p_\pi$ is unrealizable, the
path $\pi$ is infeasible.
The synthesis problems generated by Mechtaev et al.\ (which are not directly expressible
in \sygus) are decidable
because the search space is defined by a finite set of templates, and the
synthesis problem can be encoded by an SMT formula.
%While the authors propose an efficient technique base on
%uninterpreted functions for
%efficiently checking unrealizability for this type of search spaces, their technique does not work for infinite search spaces.
To the best of our knowledge, our technique is the first one that can
check unrealizability of general \sygus problems in which the search space is an \emph{infinite set of functionally distinct  terms}. 

\noindent\textbf{Acknowledgment}	
This work was supported, in part, by a gift from Rajiv and Ritu Batra; by AFRL under DARPA MUSE award FA8750-14-2-0270 and DARPA STAC award FA8750-15-C-0082; by ONR under grant N00014-17-1-2889; by NSF under grants CNS-1763871 and CCF-1704117; and by the UW-Madison OVRGE with funding from WARF. 
\Omit{
\loris{is the next sentence necessary? I've never seen it in any paper}
Any opinions, findings, and conclusions or recommendations expressed in this publication are those of the authors, and do not necessarily reflect the views of the sponsoring agencies.
}
%%% Local Variables: 
%%% mode: latex
%%% TeX-master: "main.tex"
%%% End: 

%\input{conclusion}

%
%
%
%
% ---- Bibliography ----
%
% BibTeX users should specify bibliography style 'splncs04'.
% References will then be sorted and formatted in the correct style.
%
	\bibliographystyle{abbrv}
	\bibliography{ref}
\iffull
\appendix
\setcounter{theorem}{1}
\setcounter{lemma}{0}
% -*- TeX-master: t; TeX-PDF-mode: t -*-

\section{Additional Material}
\label{App:AdditionalMaterial}

\subsection{Encoding in the Presence of Nested Function-Invocations}
\label{Se:complex-invocations}

In \sectref{EncodingAlgorithm}, we presented a simplified encoding that relied on the 
specification $\psi(f,x)$ to only involve function invocations of the form $f(x)$,
where $x$ represents the input parameter of the function to be synthesized.
In this section, we show with a simple example how such a restriction can be overcome.

Consider the following semantic specification that involves multiple
invocations of the function $f$ on different arguments, as well as
nested function calls:
\[
	\psi_1(f,x)\eqdef f(f(x))=f(x+x).
\]
By introducing new input variables and performing the proper
refactoring, we can rewrite $\psi_1$ as the following specification,
where $f$ is always called on a single input variable:
\[
  \psi_2(f,x,y_1,y_2,y_3,y_4)
  \eqdef
  \left[\begin{array}{cl}
           & f(x) = y_1 \wedge f(y_1) = y_2 \\
    \wedge & x+x = y_3 \wedge f(y_3) = y_4
  \end{array}\right] \rightarrow y_2 = y_4.
\]
It is now easy to adapt our encoding to operate over this new specification.
First, the program $\pge{G}{E}$ will now operate over input examples of the
form $c = \{w_1,\ldots, w_k\}$, where each  example $c$ is a tuple corresponding
to the values of variables $\{x,y_1,y_2,y_3,y_4\}$.
Second, the program will need to compute  the values of all possible
calls of $f$ on the various input parameters.
Hence, for every expression $f(z)$ in $\psi_2$, 
non-terminal $A$, and  example $w_i$, the program
$\pge{G}{E}$ will have a global variable $\gva{z}{i}{A}$ 
computing the value of the expression generated by $A$ parametrized by $z$,
with respect to the values in input example $w_i$.

For instance, assume that the input grammar has a production $A\to \pi_1$ 
that generates an access on the first parameter of the function to be
synthesized, and assume that we currently only have one input example.
The corresponding code for the production would be
\[
\begin{array}{l}
  \texttt{funcA\ (int\ v\_x, int\ v\_{y1}, int\ v\_{y2},  int\ v\_{y3}, int\ v\_{y4}) \{}\\
  ~~~~\texttt{if(\nd{}) \{}\\
  ~~~~~~~~\gva{x}{1}{A} \texttt{ = v\_{x};} ~~~ 	\texttt{ // Computing  $f(x)$} \\
  ~~~~~~~~\gva{y1}{1}{A} \texttt{ = v\_{y1};}	\texttt{ // Computing  $f(y_1)$} \\
  ~~~~~~~~\gva{y3}{1}{A} \texttt{ = v\_{y3};} 	\texttt{ // Computing  $f(y_3)$}	\\	
  ~~~~\texttt{\}}\\
  ~~~~\texttt{...}
\end{array}
\]
In summary, thanks to the ability to execute a finite number of inputs
in lock-step, our encoding can handle specifications that contain nested
function-invocations.

\subsection{Overcoming a Quirk of \seahorn}
\label{Se:OvercomingAQuirkofSeaHorn}

Because \seahorn is unsound for satisfiability, it can report
that some expression-tree satisfies behavioral specification $\psi$,
when in fact no such expression-tree exists.
In effect, \seahorn overapproximates the set of reachable states, and
erroneously concludes that the assertion in $\re_{\en(\sy,\cex)}$
can be falsified (i.e., all example inputs satisfy $\psi$).
We encountered this situation in our experiments;
for some unknown reason, when the following two productions
were included in the grammar, \seahorn would report that
$\re_{\en(\sy,\cex)}$ was \emph{\textbf{satisfiable}} in cases
when it should have reported \emph{\textbf{unsatisfiable}}:
\begin{equation}
  \label{Eq:BExprProductions}
  \textrm{BExpr} ::= \textrm{Not}(\textrm{BExpr}) \mid \textrm{And}(\textrm{BExpr}, \textrm{BExpr})
\end{equation}

For the examples on which this happened, we found that we could
delete these two productions, which resulted in a grammar of
equivalent expressiveness.
That is, because the grammar still contained the \textrm{IfThenElse}
operator, for all expressions $e_1$, $e_2$, $e_3$, and $e_4$, the
expression $\textrm{IfThenElse}(\textrm{Not}(e_1),e_2,e_3)$
is equivalent to $\textrm{IfThenElse}(e_1,e_3,e_2)$,
and the expression $\textrm{IfThenElse}(\textrm{And}(e_1,e_2),e_3,e_4)$
is equivalent to $\textrm{IfThenElse}(e_1,\textrm{IfThenElse}(e_2,e_3,e_4),e_4)$.
When we ran the same \sygus problem $\sy$ with productions
(\ref{Eq:BExprProductions}) removed from the grammar,
\seahorn reported that $\re_{\en(\sy,\cex)}$ was unsatisfiable.
Because \seahorn is sound for unsatisfiability, the latter
is the correct answer, and demonstrates that \sygus problem
$\sy$ (in both modified and unmodified form) is unrealizable.

Because the expressibility of the grammar is unchanged with and
without productions (\ref{Eq:BExprProductions}), these examples
demonstrate that the effect is caused by some overapproximation made
by \seahorn, triggered by productions (\ref{Eq:BExprProductions}) and
the encoding described in \sectref{EncodingAlgorithm}.

%%% Local Variables: 
%%% mode: latex
%%% TeX-master: "main.tex"
%%% End: 

% -*- TeX-master: t; TeX-PDF-mode: t -*-

\section{Proofs of Theorems}
\label{App:proofs}

\begin{lemma}[Soundness]
%\label{lem:soundness-cegis}
	If $\sy^\cex$ is unrealizable then $\sy$ is unrealizable.
\end{lemma}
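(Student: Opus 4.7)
The plan is to prove the contrapositive: if $\sy$ is realizable, then $\sy^\cex$ is realizable. This direction is a near-immediate consequence of the fact that universal quantification over a subset is implied by universal quantification over the full domain, so the argument should be very short.

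First I would unfold the definitions. By the hypothesis that $\sy = (\psi(f,\bar{x}), G)$ is realizable, there exists an expression $e \in L(G)$ such that $\forall \bar{x}.\, \psi(\sem{e},\bar{x})$ holds. Recall from the definition given just before the lemma that $\sy^\cex = (\psi^\cex(f,\bar{x}), G)$ uses the same grammar $G$ and the restricted specification $\psi^\cex(\sem{e},\bar{x}) \eqdef \forall \bar{x} \in \cex.\, \psi(\sem{e},\bar{x})$.

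Next, I would take the \emph{same} witness $e$ and argue that it also witnesses realizability of $\sy^\cex$. For each concrete $c \in \cex$, instantiating the universally quantified statement $\forall \bar{x}.\, \psi(\sem{e},\bar{x})$ at $\bar{x} := c$ yields $\psi(\sem{e},c)$. Since this holds for every $c \in \cex$, we obtain $\forall \bar{x} \in \cex.\, \psi(\sem{e},\bar{x})$, i.e., $\psi^\cex(\sem{e},\bar{x})$ holds. Combined with $e \in L(G)$, this shows $\sy^\cex$ is realizable, contradicting the assumption that $\sy^\cex$ is unrealizable.

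There is no real obstacle here: the statement is essentially the monotonicity of universal quantification under restriction of the domain, and the grammar component $G$ is identical in $\sy$ and $\sy^\cex$ so no additional grammar-side reasoning is required. The only subtlety worth noting explicitly is that $\cex$ must be interpreted as a set of values in the domain over which $\bar{x}$ ranges, which is immediate from how $\cex$ is introduced in the \cegis description.
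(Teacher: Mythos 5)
Your proof is correct and follows essentially the same route as the paper's: both arguments reduce to the observation that $\forall \bar{x}.\,\psi(\sem{e},\bar{x})$ implies $\forall \bar{x}\in\cex.\,\psi(\sem{e},\bar{x})$ for the same witness $e\in L(G)$, the paper phrasing this as $\psi\Rightarrow\psi^\cex$ plus contraposition while you contrapose the lemma statement directly. No gaps.
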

\begin{proof}
For every expression $e$ and input $\bar{c}$ we have that $\psi(\sem{e},\bar{c})\Rightarrow \psi_{\cex}(\sem{e},\bar{c})$ and by contraposition $\neg\psi^\cex(\sem{e},\bar{c})\Rightarrow\neg \psi(\sem{e},\bar{c})$. 
Hence, the lemma holds.\qed
\end{proof}

\begin{lemma}[Incompleteness]
%\label{lem:incompleteness-cegis}
	There exists an unrealizable \sygus problem $\sy$ such that for every finite set of 
	examples $\cex$ the problem $\sy^\cex$ is realizable.
\end{lemma}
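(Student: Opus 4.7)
The plan is to exhibit a concrete \sygus instance $\sy = (\psi, G)$ whose grammar is expressive enough to satisfy the specification on any finite set of examples, but not expressive enough to satisfy the universally quantified specification. I would work in the background theory of linear integer arithmetic over $\mathbb{Z}$, take the grammar $G$ with productions
\[
  S \;::=\; 0 \;\mid\; 1 \;\mid\; \textrm{Plus}(S,S),
\]
so that the expressions in $L(G)$ denote exactly the constant functions $x \mapsto c$ for $c \in \mathbb{Z}_{\geq 0}$, and take the specification
\[
  \psi(f, x) \;\eqdef\; f(x) > x.
\]

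The first step is to verify that $\sy = (\psi, G)$ is unrealizable. Every $e \in L(G)$ satisfies $\sem{e}(x) = c$ for some fixed non-negative integer $c$, so the formula $\forall x.\, \psi(\sem{e}, x)$ becomes $\forall x.\, c > x$, which fails at $x = c$. Hence no element of $L(G)$ is a solution.

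The second step is to show that $\sy^\cex$ is realizable for every finite $\cex = \{c_1, \ldots, c_k\} \subseteq \mathbb{Z}$. Define $M = \max(|c_1|, \ldots, |c_k|) + 1$; since $M$ is a positive integer, there is an expression $e_M \in L(G)$ with $\sem{e_M} \equiv M$, constructed from copies of $1$ combined by $\textrm{Plus}$. For each $c_i \in \cex$ we have $\sem{e_M}(c_i) = M > |c_i| \geq c_i$, which witnesses $\psi^\cex(\sem{e_M}, \bar{x})$ and therefore realizability of $\sy^\cex$.

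There is no substantive technical obstacle; the proof is an existence argument, and the only subtlety lies in choosing the right pair $(\psi, G)$. The grammar is intentionally degenerate, producing only constants, and therefore cannot track the growth of the universally quantified $x$; yet because it contains arbitrarily large constants, it can always dominate any specific finite collection of example inputs. This gap between ``true for all $x$'' and ``true on each of finitely many $x$'' is exactly what the lemma asserts to exist.
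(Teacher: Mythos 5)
Your proof is correct. It follows the same overall strategy as the paper---exhibit a concrete \sygus instance whose grammar generates a family of expressions, each of which fails the universally quantified specification, yet which is ``dense'' enough to cover any finite set of input examples---but the concrete witness is different and somewhat simpler. The paper uses the specification $f(x)=x$ with a grammar containing $\textrm{IfThenElse}$ and $\textrm{Equals}(x,\textrm{Start})$, so that for a finite example set $\{n_1,\ldots,n_k\}$ one builds a lookup table $\textrm{IfThenElse}(\textrm{Equals}(x,T(n_1)),T(n_1),\ldots)$ interpolating the identity on exactly those points; unrealizability follows because such terms have finite range. Your witness replaces the two-sided equality constraint with the one-sided, monotone constraint $f(x)>x$ and a constants-only grammar $S ::= 0 \mid 1 \mid \textrm{Plus}(S,S)$, so a single large constant $M$ dominates any finite sample and no conditional machinery is needed; unrealizability is immediate because no constant exceeds every integer. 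Both arguments are sound; yours buys a smaller grammar and a shorter verification, while the paper's version has the incidental feature of reusing grammar fragments ($\textrm{IfThenElse}$, $\textrm{Plus}$) that appear in its running examples. One cosmetic remark: you may also want to note the degenerate case $\cex=\emptyset$, where $\psi^\cex$ is vacuously satisfied by any member of the (nonempty) language, so realizability holds there too.
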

\begin{proof}
Let $sy_{\eq}=(\psi_{\eq}(f,x),G_{\eq})$ be the \sygus problem over the theory of linear-integer arithmetic such that
$\psi_{\eq}(f,x)\eqdef f(x)=x$;
that is, $\psi_{\eq}(f,x)$ is a predicate denoting that $f$ should implement the identity function.
Let $G_{\eq}$ be the following grammar:
\[
	\hspace{-5pt}
	\begin{array}{r@{~}r@{~~}ll}
	%\multicolumn{3}{l}{\textrm{Main definitions:}} \\
	\textrm{Start}	&	::=	& \textrm{Plus}(\textrm{Start}, \textrm{Start}) \hspace{0.34mm}
                                             \mid \textrm{IfThenElse}(\textrm{BExpr}, \textrm{Start}, \textrm{Start})
                                             \mid 0 \mid 1\\			
	\textrm{BExpr}	&	::=	& \textrm{Equals}(x, \textrm{Start})                                             
	\end{array}
	\]
	The problem $sy_{\eq}$ is unrealizable. Because the grammar does not contain the variable $x$,
	every expression $e=L(G_{\eq})$ can only produce a finite number of constant outputs.
	However, for every set of examples $\cex=\{n_1,\ldots,n_k\}$ the following expression $e_\cex\in L(G_{\eq})$
	is a valid solution to $\sy^\cex$ (i.e., $\psi^\cex_{\eq}(\sem{e^\cex},x)$ holds):
	\[
	\begin{array}{l}
	  \textrm{IfThenElse}(\textrm{Equals}(x,T(n_1)), T(n_1),\\
	  ~~~~\textrm{IfThenElse}(\textrm{Equals}(x,T(n_2)), T(n_2), \ldots, T(n_k)\ldots)
	\end{array}
       \]
where $T(n)$ is the expression-tree corresponding to $0+\underbrace{1+\ldots+1}_{n}$.
Hence, the \cegis algorithm will never terminate for $\sy_{\eq}$.\qed
\end{proof}

\begin{lemma}
%\label{lem:soundness-encoding}
For every non-terminal $A$,  expression $e\in L(G,A)$, and input set $\{i_1,\dots,i_k\}$, 
\[
  (\sem{e}(i_1),\dots,\sem{e}(i_k))=\sem{\texttt{funcA}[\num(e)]}(i_1,\dots,i_k)
\]
\end{lemma}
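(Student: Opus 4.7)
The plan is a structural induction on the expression-tree $e\in L(G,A)$. Because the integer $\texttt{n}$ is consumed bit-by-bit by successive invocations of \texttt{nextbit}, the inductive hypothesis as stated is too weak: when \texttt{funcA} recurses into a child's function, that child's view of $\texttt{n}$ must contain not only $\num(e_l)$ but also the encodings of its right-siblings. I would therefore strengthen the statement to: for every $n_0\in\nat$, calling \texttt{funcA} with $\texttt{n}$ initialized to $\num(e)+n_0\cdot 2^{\ell(e)}$, where $\ell(e)$ is the bit-length of $\num(e)$, terminates with the globals $\{\gv{1}{A},\ldots,\gv{k}{A}\}$ equal to $(\sem{e}(i_1),\ldots,\sem{e}(i_k))$ and with $\texttt{n}=n_0$. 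Setting $n_0=0$ recovers the lemma.

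In the base case $e$ consists of a single node $\node_1$ generated by a nullary production $r_i\colon A\to b^{(0)}$, so $\num(e)=\numn(\node_1)$. By the definition of $\numn$, this bit pattern is precisely the sequence that drives the first $i-1$ guards of the form \texttt{if(nextbit())} to \emph{false} and, when $i<|\delta(A)|$, the $i$-th guard to \emph{true}. Hence control reaches $\enprod(r_i)$, whose body assigns $\gv{m}{A}=\en_b^m()$ for every $m\in\{1,\ldots,k\}$; the intended correctness of $\en_b^m$ with respect to the background theory makes this equal to $\sem{b}(i_m)$, while $\texttt{n}$ is left at $n_0$ as required.

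For the inductive step, take $e=b^{(j)}(e_1,\ldots,e_j)$ produced by $r_i\colon A\to b^{(j)}(B_1,\ldots,B_j)$, and write $\ell_l=\ell(e_l)$. The pre-order traversal of $e$ visits $\node_1$ first and then, in order, the pre-order traversals of $e_1,\ldots,e_j$, so unrolling $\num$ yields, as a bit-string with low-order bits on the right, $\num(e)=\num(e_j)\cdots\num(e_1)\,\numn(\node_1)$. The argument used in the base case shows that the low $|\numn(\node_1)|$ bits drive \texttt{funcA} into the $i$-th branch, after which $\texttt{n}$ equals $\num(e_1)+\num(e_2)\cdot 2^{\ell_1}+\cdots+n_0\cdot 2^{\ell_1+\cdots+\ell_j}$. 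Applying the strengthened induction hypothesis to each child in turn, the recursive call associated with $B_l$ sets the globals for $B_l$ to $(\sem{e_l}(i_1),\ldots,\sem{e_l}(i_k))$ and shifts $\texttt{n}$ by exactly $\ell_l$ bits, so the next sibling starts with $\num(e_{l+1})$ as its low-order suffix. The temporaries $\child{l}{m}$ freeze each value before it is overwritten, and the final assignments combining them via $\en_b^m$ evaluate to $\sem{b}(\sem{e_1}(i_m),\ldots,\sem{e_j}(i_m))=\sem{e}(i_m)$ by compositionality of $\sem{\cdot}$ and correctness of $\en_b^m$.

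The main obstacle I anticipate is keeping the bit-layout bookkeeping straight: the definition of $\num(e)$ concatenates $\numn$ values in reverse pre-order (the last visited node supplies the highest-order block), while \texttt{funcA} consumes bits from the low-order end and issues its recursive calls in pre-order. Packaging this as the auxiliary invariant ``after the call returns, $\texttt{n}$ has been right-shifted by exactly $\ell(e)$ bits'' and observing that pre-order visits a root before its subtrees in a fixed left-to-right order make the recursion align, reducing the remainder of the proof to the compositional semantics of $b$ and the correctness of the per-production encoding $\en_b^m$.
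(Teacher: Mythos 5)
Your proposal is correct and follows essentially the same structural induction as the paper's proof: the low-order bits of $\num(e)$ select the production branch, the induction hypothesis handles the children in left-to-right order, and the trailing assignments combine the frozen child values via $\en_b^m$. The only difference is that you strengthen the induction hypothesis to track the residual high-order bits $n_0$ and the exact number of bits consumed, which the paper leaves implicit (its proof applies the hypothesis to $\texttt{funcAl}[\num(e_l)]$ directly, tacitly relying on the fact that extra high-order bits do not affect the run---the content of the separate shared-suffix lemma); your version is the more self-contained of the two.
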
	
\begin{proof}
The proof is by structural induction on $e$.
Let $\node$ denote the root of $e$, and
$A \to \sigma^{(j)}(A_1, \ldots, A_j)$ denote the production instance at $\node$.
Note that $\num(e)=\num(e_j)\cdots \num(e_1) \numn(\node)$.

Suppose that $e=\node$ is a leaf node;
that is, the tree at $\node$ is an instance of a production of the
form $A \to \node^{(0)}$.
Because $\num(e) = \numn(\node^{(0)})$, for every input set $\{i_1,\dots,i_k\}$,
$\texttt{funcA}[\num(e)]$ selects the branch in $\texttt{funcA}$
that captures the semantics of $A \to \node^{(0)}$.
In that code, $e$ is evaluated on the $k$ values $\{i_1,\dots,i_k\}$.
Therefore,
$(\sem{e}(i_1),\dots,\sem{e}(i_k))=\sem{\texttt{funcA}[\num(e)]}(i_1,\dots,i_k)$
holds.

{\bf Inductive step}:
Let $e = \sigma^{(j)}(e_1, \ldots, e_j)$, where the property to be shown
is assumed to hold for each of the $e_l$.
For each $e_l$, let $\node_l$ be the root of $e_l$.

The procedure $\texttt{funcA}[\num(e)]$ uses $\numn(\node)$ to select
the branch $B$ in $\texttt{funcA}$ that captures the semantics of the
production $A \to \sigma^{(j)}(A_1, \ldots, A_j)$.
For every input set $\{i_1,\dots,i_k\}$, the induction hypothesis ensures
that the following property holds:
for $1 \leq l \leq j$,
$(\sem{e_l}(i_1),\dots,\sem{e_l}(i_k))=\sem{\texttt{funcAl}[\num(e_l)]}(i_1,\dots,i_k)$.
Therefore, each call to a procedure $\texttt{funcAl}$ in $B$ computes
the $k$ intermediate answers that correspond to the evaluation of $e_l$ on the
$k$ values $\{i_1,\dots,i_k\}$.
The code in $B$ that follows the final call to $\texttt{funcAj}$ uses the
collections of intermediate results to finish $k$ computations of the semantics
of $A \to \sigma^{(j)}(A_1, \ldots, A_j)$.
Therefore, $(\sem{e}(i_1),\dots,\sem{e}(i_k))=\sem{\texttt{funcA}[\num(e)]}(i_1,\dots,i_k)$
holds.\qed
\end{proof}

\begin{lemma}
%\label{lem:completeness-encoding}
For every non-terminal $A$ and number $\texttt{n}$ such that
$\sem{\texttt{funcA}[{n}]}(i_1,\dots,i_k) \neq \bot$ (i.e.,
$\texttt{funcA}$ terminates when the non-deterministic choices are
controlled by $\texttt{n}$), there exists a minimal $\texttt{n}'$ that
is a ($\textrm{base}~2$) suffix of $\texttt{n}$ for which
(i) there is an $e \in L(G)$ such that $\num(e) = {n}'$, and
(ii) for every input $\{i_1,\dots,i_k\}$,
\[
  \sem{\texttt{funcA}[{n}]}(i_1,\dots,i_k)=\sem{\texttt{funcA}[{n}']}(i_1,\dots,i_k).
\]
\end{lemma}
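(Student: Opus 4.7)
The plan is to induct on the (finite) recursion depth of the terminating execution of $\texttt{funcA}[n]$, tracking exactly which bits of $\texttt{n}$ are read. Each \texttt{nextbit()} call returns the current low-order bit of $\texttt{n}$ and right-shifts $\texttt{n}$, so a terminating execution consumes some finite number $T$ of bits, namely the low-order $T$ bits of the initial value of $\texttt{n}$. I would define $n'$ to be the base-2 number formed by exactly those $T$ consumed bits in their original low-to-high positions, so that $n'$ is a suffix (in the base-2 sense of the lemma) of $n$ with no higher-order bits set. The goal then splits into showing $n' = \num(e)$ for some $e \in L(G,A)$ and that $\texttt{funcA}[n]$ and $\texttt{funcA}[n']$ execute identically.

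The induction hypothesis states: for every non-terminal $A$ and every terminating $\texttt{funcA}[n]$, the consumed-bit value $n'$ equals $\num(e)$ for some derivation $e \in L(G,A)$. In the base case the selected production is a leaf $A \to b^{(0)}$; if it is the $i$-th production of $\delta(A)$, the chain of \texttt{else if(\nd{})} tests consumes $i-1$ zeros followed by a 1 when $i < |\delta(A)|$, and $i-1$ zeros (falling through to the final \texttt{else}) when $i = |\delta(A)|$. Read low-to-high, these consumed bits match $\numn(q)$ exactly as defined, so $n' = \numn(q) = \num(e)$. In the inductive step, for a production $r = A \to b^{(j)}(A_1,\ldots,A_j)$ at node $q$, bits forming $\numn(q)$ are consumed to select $r$, and then $\texttt{funcA}_1, \ldots, \texttt{funcA}_j$ execute in left-to-right order, each terminating since the whole execution does. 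The induction hypothesis applied to each recursive call yields trees $e_l \in L(G, A_l)$ with the bits consumed by call $l$ equal to $\num(e_l)$. Concatenating bits in order of consumption gives $\numn(q)\cdot\num(e_1)\cdots\num(e_j)$ from low-order to high-order, which by the definition of $\num$ equals $\num(e_j)\cdots\num(e_1)\,\numn(q) = \num(e)$ for $e = b^{(j)}(e_1,\ldots,e_j) \in L(G,A)$.

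Property (ii) then follows because $\texttt{funcA}[n]$ and $\texttt{funcA}[n']$ see the identical sequence of \texttt{nextbit()} return values (both executions inspect only the low-order $T$ bits, which agree by construction) and thus make the same non-deterministic choices on the same inputs; by Lemma~\ref{lem:soundness-encoding} both compute $(\sem{e}(i_1),\dots,\sem{e}(i_k))$. Minimality of $n'$ is immediate: since exactly $T$ bits are inspected, no strictly shorter suffix of $n$ can drive the same terminating execution, because the additional high-order bits beyond position $T$ are never read and so any two values of $\texttt{n}$ agreeing on the low-order $T$ bits produce identical behavior.

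The main obstacle will be the bookkeeping around the asymmetry in $\numn$: the ``final else'' branch of $\texttt{funcA}$ consumes one fewer bit than the other branches, corresponding to the case $i = |\delta(A)|$ in the definition of $\numn(q)$. I would handle this cleanly by doing a case split on whether the selected production is last in $\delta(A)$ and explicitly writing out the consumed-bit string in each case, making sure the low-to-high reading order matches the bit layout of $\numn$ rather than its reverse. A secondary subtlety is justifying the induction: strictly speaking I would induct on the well-founded ordering given by the (finite) dynamic call tree of the terminating execution, so that each recursive call is a smaller instance even though recursion in $G$ may be unbounded a priori.
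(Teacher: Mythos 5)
Your proposal is correct and follows essentially the same route as the paper's proof: identify the finite sequence of bits consumed by the terminating execution, let $n'$ be that bit string, let $e$ be the expression-tree generated top-down left-to-right by those choices, and conclude (ii) and minimality from the fact that both executions read exactly those bits. The paper simply asserts that the consumed bits concatenate to $\num(e)$ via the pre-order traversal, whereas you spell this out by induction on the dynamic call tree; this is added detail, not a different argument.
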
	
\begin{proof}
Assume that the computation $\sem{\texttt{funcA}[{n}]}(i_1,\dots,i_k)$ terminates.
Let $b_1, \ldots, b_j$ be the finite sequence of bits drawn by $\nd{}$
throughout the computation.

\noindent
{\it Proof of (i)}: Let $e$ be the expression-tree generated top-down,
left-to-right using the sequence $b_1, \ldots, b_j$.
Let $\texttt{n}'$ be the binary number $b_j \cdots b_1$.
Because $\num(e)$ is the concatenation, in right-to-left order,
of the sequence of $\num(\cdot)$ values for the nodes of $e$
visited during a pre-order traversal, $\num(e) = \texttt{n}'$.

\noindent
{\it Proof of (ii)}:
Property (ii) holds because $\texttt{n}$ and $\texttt{n}'$ agree on the
($\textrm{base}~2$) suffix $b_j \cdots b_1$, and exactly $j$ bits are
used during the executions of both $\texttt{funcA}[{n}](i_1,\dots,i_k)$
and $\texttt{funcA}[{n}'](i_1,\dots,i_k)$---which also shows that
$\texttt{n}' = b_j \cdots b_1~(\textrm{base}~2)$ is minimal.\qed
\end{proof}

\begin{theorem}
%\label{thm:correctness-encoding}
Given a \sygus  problem $\sy^\cex=(\psi_E(f,x), G)$ over a finite set of examples $\cex$, 
\[
		 \sy^\cex\text{ is \textbf{realizable} } \Longleftrightarrow  \re_{\en(\sy, E)}\text{ is \textbf{satisfiable}}
\]
\end{theorem}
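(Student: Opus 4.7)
The plan is to prove both directions of the biconditional separately, using Lemma~\ref{lem:soundness-encoding} and Lemma~\ref{lem:completeness-encoding} as the main technical building blocks. The key insight is that satisfiability of $\re_{\en(\sy,E)}$ corresponds precisely to falsifying the assertion at the end of \texttt{main}, and that assertion is $\bigvee_{1\leq i\leq k} \neg \psi(f,c_i)[\gv{i}{S}/f(x)]$, so falsifying it means that \emph{every} $\psi(\sem{e},c_i)$ must hold for the expression implicitly computed along the chosen non-deterministic path.

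For the forward direction ($\sy^\cex$ realizable $\Rightarrow$ $\re_{\en(\sy,E)}$ satisfiable), I would start with a witness expression $e \in L(G)$ such that $\psi(\sem{e},c_i)$ holds for every $c_i \in \cex$. I would then instantiate $\texttt{n} := \num(e)$ in the non-deterministic program $\hat{P}[\texttt{n}]$. By Lemma~\ref{lem:soundness-encoding} applied to the initial non-terminal $S$, the global variables satisfy $\gv{i}{S} = \sem{e}(c_i)$ after the call to \texttt{funcS}. Since each $\sem{e}(c_i)$ satisfies $\psi$, every disjunct $\neg \psi(f,c_i)[\gv{i}{S}/f(x)]$ evaluates to false, so the whole assertion is falsified and the reachability problem is satisfiable.

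For the backward direction ($\re_{\en(\sy,E)}$ satisfiable $\Rightarrow$ $\sy^\cex$ realizable), I would start with a value $n$ of \texttt{n} that falsifies the assertion. The fact that control reaches the assertion implies $\sem{\texttt{funcS}[n]}(c_1,\dots,c_k) \neq \bot$, so I may apply Lemma~\ref{lem:completeness-encoding} to obtain a (minimal) suffix $n'$ of $n$ and an expression $e \in L(G)$ with $\num(e) = n'$ such that $\sem{\texttt{funcS}[n]}(c_1,\dots,c_k) = \sem{\texttt{funcS}[n']}(c_1,\dots,c_k)$. Combining this with Lemma~\ref{lem:soundness-encoding} applied at $\num(e) = n'$ yields $\gv{i}{S} = \sem{e}(c_i)$ for each $i$. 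Because the assertion is falsified, each disjunct is false, hence $\psi(\sem{e},c_i)$ holds for all $i \in \{1,\dots,k\}$, showing $e$ witnesses realizability of $\sy^\cex$.

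The only mild obstacle is the bookkeeping around non-termination and the ``shared suffix'' phenomenon flagged just before Lemma~\ref{lem:completeness-encoding}: a raw value $n$ coming from the verifier need not itself equal $\num(e)$ for any $e \in L(G)$, and could in principle encode a non-terminating exploration. Both issues are precisely what Lemma~\ref{lem:completeness-encoding} handles: termination is guaranteed because reachability of the assertion implies termination of \texttt{funcS}, and the lemma then produces the actual expression-tree whose $\num$ matches a suffix of $n$ and which computes the same outputs. Once these two lemmas are invoked in the two directions, the proof is a short chain of implications, and no further calculation is required.
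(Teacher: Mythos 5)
Your proposal is correct and follows essentially the same route as the paper's proof: the forward direction instantiates $\texttt{n} := \num(e)$ and applies Lemma~\ref{lem:soundness-encoding}, and the backward direction uses Lemma~\ref{lem:completeness-encoding} to recover an expression-tree from the satisfying value of $\texttt{n}$ before applying Lemma~\ref{lem:soundness-encoding} again. Your explicit handling of termination and the shared-suffix issue is slightly more detailed than the paper's, but the argument is the same.
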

\begin{proof}
\mypar{$\Rightarrow$ direction}:
Assume that $\sy^\cex$  is \textbf{realizable}. Then there exists an expression $e\in L(G)=L(G,S)$ such that
$\forall x\in E.\psi(\sem{e},x)$.
By Lemma~\ref{lem:soundness-encoding}, for every $\{i_1,\dots,i_k\}$,
$(\sem{e}(i_1),\dots,\sem{e}(i_k))=\sem{\texttt{funcA}[\num(e)]}(i_1,\dots,i_k)$.
Hence, the assertion in program $\en(\sy, E)$ is false and
the reachability problem $\re_{\en(\sy, E)}$ is \textbf{satisfiable}.

\mypar{$\Leftarrow$ direction}:
Assume that $\re_{\en(\sy, E)}$ is \textbf{satisfiable}. Then there exists a value of $n$ 
that makes the assertion in program $\en(\sy, E)$ false (i.e., the specification
holds for all inputs $c_i \in \cex$).
By Lemma~\ref{lem:completeness-encoding}, there exists a minimal $n'$ for
which the program has equivalent semantics (in particular, the assertion
in $\en(\sy, E)$ is still false),
and there exists an expression $e\in L(G)$ such that $\num(e)=n'$.
Hence, $e$ is a solution to \sygus problem $\sy^\cex$;
i.e., $\sy^\cex$ is \textbf{realizable}.\qed
\end{proof}

%%% Local Variables: 
%%% mode: latex
%%% TeX-master: "main.tex"
%%% End: 

% -*- TeX-master: t; TeX-PDF-mode: t -*-

\section{Supplementary Evaluation Results}
\label{App:supp-eval}

The complete results of our evaluation are shown in
Tables~\ref{Ta:results} and~\ref{Ta:results2}.
For brevity, in Table~\ref{Ta:results} we omit consecutive benchmarks
on which \name times out---e.g., the ``$\dots$'' between benchmarks
max4 and max15 represents 10 benchmarks from max5 to max14 for which \name times out.

The tables present the \textbf{number of nonterminals} and the
\textbf{number of productions} in the grammar of each benchmark, the
\textbf{number of examples} used to prove unrealizability, the
total time taken by \name, and the time taken by \seahorn for the
last (un)reachability problem.
For benchmarks on which \name times out, the value given for
``\textbf{number of examples}''
is the number of examples generated by the CEGIS loop when
\name times out.

\begin{table*}[!t]
\caption{
Performance of \name on \textsc{LimitedIf} and \textsc{LimitedPlus} benchmarks.
\xmark\ denotes a timeout.
}
\scriptsize
\centering
\begin{tabular}{cc|rrrrr}
& \multirow{2}{*}{\bf Problem} & \multirow{1}{*}{\bf number of} & \multirow{1}{*}{\bf number of} & \multirow{1}{*}{\bf number of} & \multirow{1}{*}{\bf total} & \multirow{1}{*}{\bf \seahorn} \\%& \multirow{2}{*}{\bf extra col} & \multirow{2}{*}{\bf extra col} \\
%multicolumn{1}{C{10mm}}{\bf Time} &  \multicolumn{2}{|c}{\bf Time single line} [sec]\\
&    &  {\bf nonterminals}  &  {\bf productions}  &  {\bf examples}  &  \multirow{1}{*}{\bf time (s)}  & \multirow{1}{*}{\bf time (s)}   \\%&  & \\
\cline{1-7}

\parbox[t]{3mm}{\multirow{29}{*}{\rotatebox[origin=c]{90}{\!\!\!\!\!\!\textsc{LimitedIf}}}}
&	max2				&	1	&	5	&		4	&	1.48	&	0.53	\\
&	max3				&	3	&	15	&		9	&	58.57	&	50.21	\\
&	max4				&	5	&	34	&		17	&	\xmark	&	\xmark	\\
&   $\dots$				&		&		&			&	\xmark	&	\xmark	\\
&   max15				&	27	&	348		&		1	&	\xmark	&	\xmark	\\
&	array\_sum\_2\_5	&	1	&	5	&		3	&	0.69	&	0.17	\\
&	array\_sum\_2\_15	&	1	&	5	&		3	&	0.87	&	0.21	\\
&	array\_sum\_3\_5	&	3	&	15	&		7	&	101.44	&	87.92	\\
&	array\_sum\_3\_15	&	3	&	15	&		7	&	134.87	&	118.77	\\
&	array\_sum\_4\_5	&	5	&	34	&		14	&	\xmark	&	\xmark	\\
&	array\_sum\_4\_15	&	5	&	34	&		16	&	\xmark	&	\xmark	\\
&   $\dots$				&		&		&			&	\xmark	&	\xmark	\\
&   array\_sum\_10\_5	&	19	&	149		&		1	&	\xmark	&	\xmark	\\
&   array\_sum\_10\_15	&	19	&	149		&		1	&	\xmark	&	\xmark	\\
&	array\_search\_2	&	3	&	15	&		7	&	112.78	&	87.32	\\
&	array\_search\_3	&	5	&	34	&		17	&	\xmark	&	\xmark	\\
&   $\dots$				&		&		&			&	\xmark	&	\xmark	\\
&	array\_search\_15	&	27	&	348		&		1	&	\xmark	&	\xmark	\\
&	mpg\_example1		&	1	&	7	&		3	&	1.12	&	0.38	\\
&	mpg\_example2		&	9	&	60	&		17	&	\xmark	&	\xmark	\\
&	mpg\_example3		&	5	&	34	&		12	&	\xmark	&	\xmark	\\
&	mpg\_example4		&   5	&	34	&		19	&	\xmark	&	\xmark	\\
&	mpg\_example5		&	9	&	60	&		11	&	\xmark	&	\xmark	\\
&	mpg\_guard1			&	1	&	6	&		2	&	0.43	&	0.18	\\
&	mpg\_guard2			&	1	&	6	&		2	&	0.49	&	0.19	\\
&	mpg\_guard3			&	1	&	6	&		2	&	0.46	&	0.17	\\
&	mpg\_guard4			&	1	&	6	&		2	&	0.58	&	0.18	\\
&	mpg\_ite1			&	3	&	15	&		8	&	369.57	&	361.21	\\
&	mpg\_ite2			&	5	&	29	&		11	&	\xmark	&	\xmark	\\
\cdashline{1-7}
\parbox[t]{3mm}{\multirow{19}{*}{\rotatebox[origin=c]{90}{\!\!\!\!\!\!\textsc{LimitedPlus}}}}
&	array\_sum\_2\_5	&	19	&	89	&		1	&	\xmark	&	\xmark	\\
&   $\dots$				&		&		&			&	\xmark	&	\xmark	\\
&   array\_sum\_10\_5	&	461	&	15515	&		1	&	\xmark	&	\xmark	\\
&	array\_sum\_2\_15	&	59	&	382	&		1	&	\xmark	&	\xmark	\\
&   $\dots$				&		&		&			&	\xmark	&	\xmark	\\
&   array\_sum\_8\_15	& $641$	&	$25211$	&		1	&	\xmark	&	\xmark	\\
&	mpg\_example1		&	59	&	382	&		1	&	\xmark	&	\xmark	\\
&	mpg\_example2		&	21	&	$101$	&		1	&	\xmark	&	\xmark	\\
&	mpg\_example3		&	143	&	2118	&		1	&	\xmark	&	\xmark	\\
&	mpg\_example4		&   443	&	14374	&		1	&	\xmark	&	\xmark	\\
&	mpg\_example5		&	351	&	9746	&		1	&	\xmark	&	\xmark	\\
&	mpg\_guard1			&	7	&	24	&		1	&	\xmark	&	\xmark	\\
&	mpg\_guard2			&	9	&	34	&		1	&	\xmark	&	\xmark	\\
&	mpg\_guard3			&	11	&	41	&		1	&	\xmark	&	\xmark	\\
&	mpg\_guard4			&   13	&	53	&		1	&	\xmark	&	\xmark	\\
&	mpg\_ite1			&	9	&	34	&		1	&	\xmark	&	\xmark	\\
&	mpg\_ite2			&	13	&	53	&		1	&	\xmark	&	\xmark	\\
&	mpg\_plane1			&	2	&	5	&		3	&		0.69	&	0.13	\\
&	mpg\_plane2			&	17	&	$60$	&		1	&\xmark		&	\xmark	\\
&	mpg\_plane3			&	29	&	$122$	&		1	&\xmark		&	\xmark	\\
\hline 
\end{tabular}
\label{Ta:results}
\vspace{-2mm}
\end{table*}

\begin{table*}[!t]
	\caption{
		Performance of \name on \textsc{LimitedConst} benchmarks.
		\xmark\ denotes a timeout.
	}
	\scriptsize
	\centering
	\begin{tabular}{cc|rrrrr}
		& \multirow{2}{*}{\bf Problem} & \multirow{1}{*}{\bf number of} & \multirow{1}{*}{\bf number of} & \multirow{1}{*}{\bf number of} & \multirow{1}{*}{\bf total} & \multirow{1}{*}{\bf \seahorn} \\%& \multirow{2}{*}{\bf extra col} & \multirow{2}{*}{\bf extra col} \\
		%multicolumn{1}{C{10mm}}{\bf Time} &  \multicolumn{2}{|c}{\bf Time single line} [sec]\\
		&    &  {\bf nonterminals}  &  {\bf productions}  &  {\bf examples}  &  \multirow{1}{*}{\bf time (s)}  & \multirow{1}{*}{\bf time (s)}   \\%&  & \\
		\cline{1-7}
		\parbox[t]{3mm}{\multirow{42}{*}{\rotatebox[origin=c]{90}{\!\!\!\!\!\!\textsc{LimitedConst}}}}
		&	array\_search\_2	&	2	&	9	&		2	&	0.78	& 	0.32	\\
		&	array\_search\_3	&	2	&	10	&		3	&	1.26	& 	0.43	\\
		&	array\_search\_4	&	2	&	11	&		3	&	1.25	& 	0.22	\\
		&	array\_search\_5	&	2	&	12	&		3	&	1.01	& 	0.50	\\
		&	array\_search\_6	&	2	&	13	&		3	&	0.87	& 	0.41	\\
		&	array\_search\_7	&	2	&	14	&		3	&	0.85	& 	0.26	\\
		&	array\_search\_8	&	2	&	15	&		3	&	0.97	& 	0.36	\\
		&	array\_search\_9	&	2	&	16	&		3	&	0.70	& 	0.48	\\
		&	array\_search\_10	&	2	&	17	&		3	&	0.80	& 	0.37	\\
		&	array\_search\_11	&	2	&	18	&		3	&	1.09	& 	0.32	\\
		&	array\_search\_12	&	2	&	19	&		3	&	1.13	& 	0.25	\\
		&	array\_search\_13	&	2	&	20	&		3	&	0.73	& 	0.29	\\
		&	array\_search\_14	&	2	&	21	&		3	&	0.77	&	0.42	\\
		&	array\_search\_15	&	2	&	22	&		3	&	1.06	& 	0.23	\\
		&	array\_sum\_2\_5	&	2	&	8	&		2	&	1.30	&	0.77	\\
		&	array\_sum\_2\_15	&	2	&	8	&		2	&	1.46	&	0.83	\\
		&	array\_sum\_3\_5	&	2	&	9	&		2	&	1.31	&	0.86	\\
		&	array\_sum\_3\_15	&	2	&	9	&		2	&	1.28	&	0.75	\\
		&	array\_sum\_4\_5	&	2	&	10	&		2	&	2.52	&	0.60	\\
		&	array\_sum\_4\_15	&	2	&	10	&		2	&	1.35	&	0.56	\\
		&	array\_sum\_5\_5	&	2	&	11	&		2	&	1.41	&	0.72	\\
		&	array\_sum\_5\_15	&	2	&	11	&		2	&	1.43	&	0.44	\\
		&	array\_sum\_6\_5	&	2	&	12	&		2	&	2.37	&	0.55	\\
		&	array\_sum\_6\_15	&	2	&	12	&		2	&	1.56	&	0.70	\\
		&	array\_sum\_7\_5	&	2	&	13	&		2	&	0.76	&	0.59	\\
		&	array\_sum\_7\_15	&	2	&	13	&		2	&	1.87	&	0.78	\\
		&	array\_sum\_8\_5	&	2	&	14	&		2	&	1.33	&	0.63	\\
		&	array\_sum\_8\_15	&	2	&	14	&		2	&	1.53	&	0.67	\\
		&	array\_sum\_9\_5	&	2	&	15	&		2	&	1.50	&	0.40	\\
		&	array\_sum\_9\_15	&	2	&	15	&		2	&	1.44	&	0.79	\\
		&	array\_sum\_10\_5	&	2	&	16	&		2	&	2.29	&	0.74	\\
		&	array\_sum\_10\_15	&	2	&	16	&		2	&	0.87	&	0.43	\\
		&	mpg\_example1		&	2	&	8	&		1	&	0.36	&	0.17	\\
		&	mpg\_example2		&	2	&	9	&		4	&	0.50	&	0.30	\\
		&	mpg\_example3		&	2	&	9	&		1	&	0.57	&	0.33	\\
		&	mpg\_example4		&	2	&	10	&		1	&	0.44	&	0.16	\\
		&	mpg\_example5		&	2	&	8	&		1	&	0.99	&	0.36	\\
		&	mpg\_guard1			&	2	&	9	&		6	&	3.08	&	1.19	\\
		&	mpg\_guard2			&	2	&	9	&		4	&	2.49	&	1.35	\\
		&	mpg\_guard3			&	2	&	9	&		4	&	1.83	&	0.50	\\
		&	mpg\_guard4			&   2	&	9	&		4	&	24.18	&	21.67	\\
		&	mpg\_ite1			&	2	&	9	&		1	&	0.33	&	0.19	\\
		&	mpg\_ite2			&	2	&	9	&		1	&	0.41	&	0.25	\\
		&	mpg\_plane2			&	2	&	9	&		1	&	0.47	&	0.32	\\
		&	mpg\_plane3			&	2	&	9	&		1	& 	0.74	&	0.51	\\
		\hline 
	\end{tabular}
	\label{Ta:results2}
	\vspace{-2mm}
\end{table*}

%%% Local Variables: 
%%% mode: latex
%%% TeX-master: "main.tex"
%%% End: 

\fi
\end{document}